\tikzstyle{line} = [draw, color=black, -latex]
\begin{document}

\title{Ranking Functions for Vector Addition Systems}

\author{
Florian Zuleger\\
zuleger@forsyte.at
}
\institute{TU Wien}
\maketitle

\begin{abstract}
Vector addition systems are an important model in theoretical computer science and have been used for the analysis of systems in a variety of areas.
Termination is a crucial property of vector addition systems and has received considerable interest in the literature.
In this paper we give a complete method for the construction of ranking functions for vector addition systems with states.
The interest in ranking functions is motivated by the fact that ranking functions provide valuable additional information in case of termination:
They provide an explanation for the progress of the vector addition system, which can be reported to the user of a verification tool, and can be used as certificates for termination.
Moreover, we show how ranking functions can be used for the computational complexity analysis of vector addition systems (here complexity refers to the number of steps the vector addition system under analysis can take in terms of the given initial vector).
\end{abstract}

\newcommand{\length}{\ensuremath{u}}
\newcommand{\field}{\ensuremath{\mathbb{S}}}
\newcommand{\vass}{\ensuremath{\mathcal{V}}}
\newcommand{\vars}{\ensuremath{\mathit{Var}}}
\newcommand{\var}{\ensuremath{x}}
\newcommand{\loc}{\ensuremath{l}}
\newcommand{\locations}{\mathit{Locs}}
\newcommand{\transitions}{\mathit{Trns}}
\newcommand{\edges}{\ensuremath{E}}
\newcommand{\update}{\ensuremath{d}}
\newcommand{\updates}{\ensuremath{D}}
\newcommand{\paath}{\ensuremath{\pi}}
\newcommand{\val}{\ensuremath{\nu}}
\newcommand{\states}{\ensuremath{\mathit{St}}}
\newcommand{\state}{\ensuremath{\sigma}}
\newcommand{\dimension}{\ensuremath{n}}
\newcommand{\dimOp}{\ensuremath{\mathit{dim}}}
\newcommand{\Val}{\ensuremath{\mathit{Val}}}
\newcommand{\lex}{\ensuremath{\mathit{lex}}}
\newcommand{\cycle}{\ensuremath{C}}
\newcommand{\multicycle}{\ensuremath{M}}
\newcommand{\zeroVec}{\ensuremath{\mathbf{0}}}
\newcommand{\oneVec}{\ensuremath{\mathbf{1}}}
\newcommand{\parameter}{\ensuremath{N}}
\newcommand{\valueSum}{\ensuremath{\mathit{val}}}
\newcommand{\counters}{\ensuremath{\mu}}
\newcommand{\countersAlt}{\ensuremath{\rho}}
\newcommand{\transition}{\ensuremath{t}}
\newcommand{\flowMatrix}{\ensuremath{F}}
\newcommand{\rankCoeff}{\ensuremath{r}}
\newcommand{\offsets}{\ensuremath{z}}
\newcommand{\character}{\ensuremath{\mathit{char}}}
\newcommand{\identity}{\ensuremath{\mathbf{Id}}}
\newcommand{\scc}{\ensuremath{S}}
\newcommand{\sccs}{\ensuremath{\mathit{SCCs}}}
\newcommand{\decreasing}{\ensuremath{T}}
\newcommand{\witness}{\ensuremath{w}}
\newcommand{\coneOp}{\ensuremath{\mathit{cone}}}
\newcommand{\cone}{\ensuremath{C}}
\newcommand{\difference}{\ensuremath{e}}
\newcommand{\result}{\ensuremath{\mathit{result}}}
\newcommand{\conSysA}{\ensuremath{A}}
\newcommand{\conSysB}{\ensuremath{B}}
\newcommand{\activeP}{\ensuremath{a}}
\newcommand{\activeQ}{\ensuremath{b}}
\newcommand{\primal}{\ensuremath{P}}
\newcommand{\dual}{\ensuremath{Q}}
\newcommand{\timeSAT}{\ensuremath{Z}}
\newcommand{\poly}{\ensuremath{\mathit{poly}}}
\newcommand{\bits}{\ensuremath{\mathit{L}}}
\newcommand{\timeLP}{\ensuremath{\timeSAT'}}
\newcommand{\complexity}{\ensuremath{\mathit{comp}}}
\newcommand{\uInv}{\ensuremath{\mathit{uBound}}}
\newcommand{\compBound}{\ensuremath{\mathit{cBound}}}
\newcommand{\expression}{\ensuremath{\mathit{exp}}}
\newcommand{\cycleVal}{\ensuremath{v}}
\newcommand{\nonPosTrans}{\ensuremath{K}}
\newcommand{\constant}{\ensuremath{p}}
\newcommand{\constantAlt}{\ensuremath{q}}
\newcommand{\parameterizedLP}{\ensuremath{C}}
\newcommand{\rationalLP}{\ensuremath{C}}
\newcommand{\optimal}{\ensuremath{c}}
\newcommand{\denominator}{\ensuremath{m}}
\newcommand{\emptysequence}{\ensuremath{\epsilon}}
\newcommand{\exampleCS}{\ensuremath{\mathit{csys}}}
\newcommand{\exampleProg}{\ensuremath{\mathit{prog}}}

\newcommand{\rank}{\ensuremath{\mathit{rank}}}
\newcommand{\qrank}{\ensuremath{\mathit{qrank}}}
\newcommand{\affineRank}{\ensuremath{\mathit{affine}}}
\newcommand{\sumRank}{\ensuremath{\mathit{sum}}}
\newcommand{\sccRank}{\ensuremath{\mathit{combine}}}
\newcommand{\rankProc}{\ensuremath{\mathit{ranking}}}
\newcommand{\domain}{\ensuremath{W}}
\newcommand{\domainElement}{\ensuremath{a}}

\newcommand{\norm}[1]{\left\lVert #1 \right\rVert}

\section{Introduction}

Vector addition systems (VASs)~\cite{journals/jcss/KarpM69} or equivalently petri nets are one of the most popular models for the representation and the analysis of parallel processes~\cite{journals/eik/EsparzaN94}.
Vector addition systems with states (VASSs)~\cite{journals/tcs/HopcroftP79} enrich VASs with a finite control, which makes them one of the most popular models of concurrent systems as the finite control allows to model different communication primitives, e.g., shared memory~\cite{journals/toplas/0001KW14}.
In this paper we study the problem of analyzing the termination of a given VASS for \emph{all} initial states.
This problem is central in the analysis of \emph{parameterized systems}, where one wants to analyze the behaviour of the system for all instances of the parameter; we discuss examples below.
Kosaraju and Sullivan have shown that termination for all initial states is decidable in PTIME~\cite{conf/stoc/KosarajuS88}.
In this paper we extend the results of Kosaraju and Sullivan in several ways.
First, we present a complete construction of ranking functions for VASSs.
Ranking functions provide valuable information: they represent an explanation for the progress of the VASS, which can be reported to the user of a verification tool, and can be used as certificates for termination.
We derive the ranking function construction as the \emph{dual} (in the sense of linear programming) of the algorithm of Kosaraju and Sullivan, which searches for counterexamples to termination.
Our main algorithm makes use of this duality and simultaneously runs the ranking function construction and search for termination counterexamples in order to achieve completeness.
Second, we apply the obtained ranking functions for the computational complexity analysis of VASSs.
Here, computational complexity refers to the problem of computing for a given VASS $\vass$ the maximal number of steps $\vass$ can take in terms of the size of the initial vector; concretely, we are interested in initial vectors whose components are bounded by some parameter $\parameter \in \mathbb{N}$.
We show how to derive upper and lower complexity bounds from the results computed by our algorithm.
We obtain a precise classification of the asymptotic complexity of conservative VASSs, namely that the complexity is always $\Theta(\parameter^k)$ for an integer $k$ between zero and the dimension of the considered VASS.
This precise characterization gives some evidence that our ranking function construction is optimal.
Finally, we state a method that precisely analyzes VASSs with linear complexity:
given a VASS $\vass$ the method either returns that $\vass$ has at least quadratic complexity or returns that $\vass$ has complexity $\optimal \parameter$ for some $c \in \mathbb{Q}$.

\begin{figure}[t]
\begin{tabular}{l|l|l|l}

\begin{minipage}{3cm}
\begin{tikzpicture}[scale=0.4, node distance = 1cm, auto]
    \node (t1) [xshift=2cm] {$i$};
    \node (t2) [below of=t1,node distance=1.5cm] {$j$};
    \node (t3) [right of=t1,node distance=1cm] {$k$};
    \path(t1) edge [line] node [above]
    {p==ff, p:=tt}(t3)
    (t1) edge [line,bend left] node [right]
    {
    {p:=ff}
    }(t2)
    (t2) edge [line,bend left] node [left]
    {p==tt} (t1)
    ;
   \end{tikzpicture}
   \end{minipage}
&
\begin{minipage}{3cm}
\begin{tikzpicture}[scale=0.4, node distance = 1cm, auto]
    \node (t1) [xshift=2cm] {$\loc_{p=\mathtt{tt}}$};
    \node (t2) [below of=t1,node distance=1.5cm] {$\loc_{p=\mathtt{ff}}$};
    \path(t1) edge [line,bend left] node [right]
    {$\begin{pmatrix}
    -1 \\
    1 \\
    0
    \end{pmatrix}$}(t2)
    (t2) edge [loop below,  every loop/.append style={looseness= 20}] node [right, yshift=0.3cm,xshift=0.1cm]
    {$\begin{pmatrix}
    -1 \\
    1 \\
    0
    \end{pmatrix}$}(t2)
    (t2) edge [line,bend left] node [left]
    {$\begin{pmatrix}
    -1 \\
    0 \\
    1
    \end{pmatrix}$} (t1)
    (t1) edge [loop above,  every loop/.append style={looseness= 20}] node [right, yshift=-0.3cm,xshift=0.1cm]
    {$\begin{pmatrix}
    1 \\
    -1 \\
    0
    \end{pmatrix}$}(t1)
    ;
   \end{tikzpicture}
   \end{minipage}
&
\begin{minipage}{4cm}
 \begin{alltt}
  void main(uint N) \{
    uint i = N, j = N;
\(l\sb{1}\!:\)  while (i > 0) \{
      i--;
      j++;
\(l\sb{2}\!:\)    while (j > 0 && *)
          j--;
  \} \}
\end{alltt}
\end{minipage}
&
\begin{minipage}{2,5cm}
\begin{tikzpicture}[scale=0.4, node distance = 1cm, auto]
    \node (t1) [xshift=2cm] {$\loc_1$};
    \node (t2) [below of=t1,node distance=1.5cm] {$\loc_2$};
    \path(t1) edge [line,bend left] node [right]
    {$\begin{pmatrix}
    -1 \\
    1
    \end{pmatrix}$}(t2)
    (t2) edge [loop below,  every loop/.append style={looseness= 20}] node [right, yshift=0.3cm,xshift=0.1cm]
    {$\begin{pmatrix}
    0 \\
    -1
    \end{pmatrix}$}(t2)
    (t2) edge [line,bend left] node [left]
    {$\begin{pmatrix}
    0 \\
    0
    \end{pmatrix}$} (t1)
    ;
   \end{tikzpicture}
   \end{minipage}
\\
   (a) & (b) & (c) & (d)
   \end{tabular}
   \caption{(a) a process template, (b) VASS $\vass_\exampleCS$, (c) a program, (d) VASS $\vass_\exampleProg$}
   \label{fig:intro}
\end{figure}

\paragraph{Motivation and Illustration of our Results.}
In concurrent systems, the number of system processes is often not known in advance,
and thus the system is designed such that a template process can be instantiated an arbitrary number of times.
The \emph{parameterized verification problem} is to analyze the system for all system sizes~\cite{journals/jacm/GermanS92,journals/tcs/FinkelGRB06,conf/lata/AbdullaDB09,conf/fmcad/JohnKSVW13,journals/sigact/BloemJKKRVW16,conf/lpar/AminofRZ15,conf/icalp/AminofRZS15}.
We illustrate VASSs as models of concurrent systems:
Fig.~\ref{fig:intro} (a) states a process template.
A concurrent system consists of $\parameter$ copies of this process template.
The processes communicate via the boolean variable $p$ (this is an instance of communication via shared memory).
The concurrent system is equivalently represented by the VASS $\vass_\exampleCS$ in Fig.~\ref{fig:intro} (b).
$\vass_\exampleCS$ has two locations $\loc_{p=\mathtt{tt}}$ and $\loc_{p=\mathtt{ff}}$, which represent the state of the shared memory.
$\vass_\exampleCS$ has dimension three in order to represent the number of processes in the local states $i$, $j$ and $k$.
The transitions of $\vass_\exampleCS$ reflect the transitions of the process template, e.g., transition $(-1 \ 1 \ 0)^T$ means that one process moves from state $i$ to $j$.
$\vass_\exampleCS$ is \emph{conservative}, i.e., the sum of the entries in each transition adds up to $0$.
We assume that each process starts in location $i$, i.e., we consider the initial valuation $(\parameter \ 0 \ 0)^T$, which is parameterized in the number of processes.
Our construction (Algorithm~\ref{alg:algorithm}) returns a lexicographic ranking function, e.g.,
\begin{equation*}
  \rank(\loc,
  \begin{pmatrix}
  i \\
  j \\
  k
  \end{pmatrix}) = \left\{
  \begin{array}{cc}
    \langle 2i+2j+1,j \rangle, & \text{ if } \loc = \loc_{p=\mathtt{tt}},\\
    \langle 2i+2j,i \rangle, & \text{ if } \loc = \loc_{p=\mathtt{ff}}.
  \end{array}
  \right.
\end{equation*}
$\rank$ is of \emph{order} $2$, i.e., the ranking function maps VASS states to tuples of length at most two.
We obtain the asymptotic complexity $\Theta(\parameter^2)$ for $\vass_\exampleCS$ from our result that relates the order of the ranking function to the computational complexity (Corollary~\ref{cor:tight-conservative}).

In previous work we have described program analysis frameworks for the automated complexity analysis of imperative programs, which use VASSs (and extensions) as  backend~\cite{conf/cav/SinnZV14,journals/jar/SinnZV17}.
For example, our techniques~\cite{conf/cav/SinnZV14,journals/jar/SinnZV17} allow to abstract the program given in Fig.~\ref{fig:intro} (c) to the VASS $\vass_\exampleProg$ stated in Fig.~\ref{fig:intro} (d).
$\vass_\exampleProg$ has two locations $\loc_1$ and $\loc_2$, which correspond to the loop headers of the program.
$\vass_\exampleProg$ has dimension two in order to represent to the variables $i$ and $j$.
The transitions of $\vass_\exampleProg$ correspond to the increments/decrements of the variables.
We consider the initial valuation $(\parameter \ \parameter)^T$, which is parameterized in the program input $\parameter$.
Our construction (Algorithm~\ref{alg:algorithm}) returns a linear ranking function, e.g.,
\vspace{-0.3cm}
\begin{equation*}
  \rank(\loc,
  \begin{pmatrix}
  i \\
  j
  \end{pmatrix}) = \left\{
  \begin{array}{cc}
    \langle 3i+j \rangle, & \text{ if } \loc = \loc_1,\\
    \langle 3i+j+1 \rangle, & \text{ if } \loc = \loc_2.
  \end{array}
  \right.
\end{equation*}
If we evaluate $\rank$ on the initial state $(\loc_1, (\parameter \ \parameter)^T)$ we obtain the precise number of steps $3\parameter + \parameter = 4\parameter$ that $\vass_\exampleProg$ can take until termination.
Here, we note that our construction is not guaranteed to return this ranking function because the coefficients of the ranking function are obtained as the solution to a linear programming problem (however, our algorithm is guaranteed to return a ranking function of smallest order, here, a linear ranking function of order $1$; linear programming heuristics might be used to find small coefficients).
We present an additional method which is guaranteed to establish the precise asymptotic complexity $4\parameter$ of $\vass_\exampleProg$ (Theorem~\ref{thm:precise-complexity}).

We summarize our contributions:
\vspace{-0.1cm}
\begin{itemize}
  \item We describe a complete algorithm for the construction of ranking functions for VASSs and prove its correctness.
      We improve the analysis of the recursion depth of~\cite{conf/stoc/KosarajuS88};
      with a further optimization, this allows us to answer an open question of~\cite{conf/stoc/KosarajuS88}.
  \item We discuss how our algorithm allows to obtain bounds on the computational complexity of VASSs.
      In particular, we obtain a precise characterization of the complexity of conservative VASSs.
  \item We state a technique for computing the precise asymptotic complexity of linear VASSs.
  \item All our constructions are PTIME and are promising for the use in practical analyzers.
\end{itemize}

\paragraph{Related Work.}
The result by Kosaraju and Sullivan~\cite{conf/stoc/KosarajuS88} is in fact only concerned with the detection of zero-cycles.
However, it is easily seen that their method likewise allows the detection of non-negative cycles.
The existence of non-negative cycles in turn is easily seen to be a complete criterion for the non-termination of VASSs by
Dickson's Lemma~\cite{Dickson1913}.

Celebrated results on VASs include establishing the precise complexity of the termination problem for a \emph{fixed} initial state; this problem can be reduced to the boundedness problem, which has been shown to be EXPSPACE-complete~\cite{Lipton76,journals/tcs/Rackoff78}.
This is a stark contrast to the PTIME complexity of termination for \emph{all} initial states, which is much more promising for practical applicability.
We further note that the termination problem for a fixed initial state for a VASS can be reduced to the same problem for VASs.
However, this is not the case for termination over all initial states; the results in this paper show that the complexity of VASs is always linear (in terms of the initial state).

Our previous work on the complexity analysis of VASSs~\cite{conf/cav/SinnZV14,journals/jar/SinnZV17} also presents ranking function constructions and discusses how to use ranking functions for complexity analysis.
The suggested approaches, however, are not complete and have been designed with the goal of practical applicability.

\section{Preliminaries}
\label{sec:preliminaries}

\paragraph{Basic Notation.}
For a set $X$ we denote with $|X|$ the number of elements of $X$.
We recall some mathematical notation for vectors and matrices which is used throughout the paper.
Let $\field$ be one of the rings $\mathbb{N}$, $\mathbb{Z}$ or $\mathbb{Q}$.
We write $\field^I$ for the set of vectors over $\field$ indexed by some set $I$.
We write $\field^{I\times J}$ for the set of matrices over $\field$ indexed by some sets $I$ and $J$.
In superscripts defining vectors or matrices we will use $n$ as a shorthand for $[1,n]$, e.g., we will write $\field^n$ instead of $\field^{[1,n]}$.
We write $\oneVec$ for the vector which has entry $1$ in every component.
Given $a \in \field^I$, we write $a(i) \in \field$ for the entry at line $i \in I$ of $a$ and $\norm{a} = \max_{i \in I} |a(i)|$ for the maximum absolute value of $a$.
Given $A \in \field^{I\times J}$, we write $A(i)$ for the vector in column $i \in I$ of $A$, $A(i,j) \in \field$ for the entry in column $i \in I$ and row $j \in J$ of $A$ and $\norm{A} = \max_{i \in I,j \in J} |A(i,j)|$ for the maximum absolute value of $A$.
We write $\identity$ for the square matrix which has entries $1$ on the diagonal and $0$ otherwise.

We use the following notation for operations on vectors:
Given $a,b \in \field^I$ we write $a + b \in \field^I$ for component-wise addition, $c \cdot a \in \field^I$ for multiplying every component of $a$ by some $c \in \field$ and $a \ge b$ for component-wise comparison.

We use the following notation for operations on vectors and matrices:
Given $A \in \field^{I\times J}$, $B \in \field^{J \times K}$ and $x \in \field^J$, we write $AB \in \field^{I\times K}$ for the standard matrix multiplication,
$Ax \in \field^I$ for the standard matrix-vector multiplication, $A^T \in \field^{J \times I}$ for the transposed matrix of $A$ and $x^T \in \field^{1 \times J}$ for the transposed vector of $x$.

\paragraph{Vector Addition System with States (VASS).}
  A vector addition with states (VASS) $\vass = (\locations(\vass),\transitions(\vass))$ consists of a finite set of \emph{locations} $\locations(\vass)$ and a finite set of \emph{transitions} $\transitions(\vass)$, where transitions $\transitions(\vass) \subseteq \locations(\vass) \times \mathbb{Z}^\dimension \times \locations(\vass)$ for some $\dimension \in \mathbb{N}$; we call $\dimension$ the \emph{dimension} of $\vass$ and write $\dimOp(\vass) = \dimension$.
  We write $\loc_1 \xrightarrow{\update} \loc_2$ to denote a transition $(\loc_1,\update,\loc_2) \in \transitions(\vass)$;
  we call the vector $\update$ the \emph{update} of transition $\loc_1 \xrightarrow{\update} \loc_2$.

  A (finite or infinite) \emph{path} $\paath$ of $\vass$ is a sequence $\loc_0 \xrightarrow{\update_1} \loc_1 \xrightarrow{\update_2} \cdots$ with $\loc_i \xrightarrow{\update_{i+1}} \loc_{i+1} \in \transitions(\vass)$ for all $i$.
  A \emph{cycle} is a path that has the same start- and end-location.
  A \emph{multi-cycle} is a finite set of cycles (the cycles do not necessarily have to be connected).
  Let $\paath = \loc_0 \xrightarrow{\update_1} \loc_1 \xrightarrow{\update_2} \cdots \loc_k$ be a finite path.
  We say $\paath$ has \emph{length} $k$ and \emph{value} $\valueSum(\paath) = \sum_{i \in [1,k]} \update_i$.
  The value $\valueSum(\multicycle)$ of a multi-cycle $\multicycle$ is the sum of the values of its cycles.
  We call a path $\paath$ \emph{non-negative}, if $\valueSum(\paath) \ge 0$.
  We call a multi-cycle $\multicycle$ \emph{non-negative}, if $\valueSum(\multicycle) \ge 0$.
  $\vass$ is \emph{connected}, if for every pair of locations $\loc,\loc' \in \locations(\vass)$ there is a path from $\loc$ to $\loc'$.

  The set of \emph{valuations} $\Val(\vass) = \mathbb{N}^{\dimOp(\vass)}$ consists of the $\dimOp(\vass)$-tuples over the natural numbers (in this paper $\mathbb{N}$ includes $0$).
  The set of \emph{states} $\states(\vass) = \locations(\vass) \times \Val(\vass)$ consists of pairs of locations and valuations.

  A \emph{step} is a triple $((\loc_1,\val_1), \update, (\loc_2,\val_2)) \in \states(\vass) \times \mathbb{Z}^{\dimOp(\vass)} \times \states(\vass)$ such that $\val_2 = \val_1 + \update$ and $\loc_1 \xrightarrow{\update} \loc_2 \in \transitions(\vass)$.
  We write $(\loc_1,\val_1) \xrightarrow{\update} (\loc_2,\val_2)$ to denote a step $((\loc_1,\val_1), \update, (\loc_2,\val_2))$ of $\vass$.
  A \emph{trace} of $\vass$ is a sequence $(\loc_0,\val_0) \xrightarrow{\update_1} (\loc_1,\val_1) \xrightarrow{\update_2} \cdots$ of steps.
  A VASS $\vass$ is \emph{terminating}, if there is no infinite trace of $\vass$.

  VASS $\vass'$ is a \emph{sub-VASS} of a VASS $\vass$, if $\locations(\vass') \subseteq \locations(\vass)$ and $\transitions(\vass') \subseteq \transitions(\vass)$.
  A \emph{strongly-connected component (SCC)} of a VASS $\vass$ is a maximal sub-VASS $\scc$ of $\vass$ such that $\scc$ is connected (our definition allows \emph{trivial} SCCs, where a SCC $\scc$ is trivial iff $\locations(\scc)$ is singleton and $\transitions(\scc) = \emptyset$).
  We denote the SCCs of $\vass$ by $\sccs(\vass)$.
  We say $\vass$ \emph{can be fully decomposed into SCCs}, if for each transition $\transition \in \transitions(\vass)$ there is an SCC $\scc \in \sccs(\vass)$ such that $\transition \in \transitions(\scc)$.

\section{Ranking Functions}

We first introduce the machinery needed for the definition of ranking functions.

A pair $(\domain, >)$ of a set $\domain$ and a relation $> \ \subseteq \domain \times \domain$ is called a \emph{well-founded domain}, if there is no infinite sequence $\domainElement_0 \domainElement_1 \dots$ of elements $\domainElement_i \in \domain$ with $(\domainElement_i,\domainElement_{i+1}) \in \ >$ for all $i$.
We write $\ge$ for the reflexive closure of $>$, i.e., $\domainElement \ge \domainElement'$ iff $\domainElement > \domainElement'$ or $\domainElement = \domainElement'$.

We denote by $\mathbb{N}^*$ the set of \emph{finite sequences} over $\mathbb{N}$, where $\mathbb{N}^*$ includes the empty sequence $\emptysequence$.
Given two sequences $\langle x_1,\ldots,x_k \rangle, \langle y_1,\ldots,y_l \rangle \in \mathbb{N}^*$, we denote their \emph{concatenation} by $\langle x_1,\ldots,x_k \rangle \oplus \langle y_1,\ldots,y_l \rangle = \langle x_1,\ldots,x_k, y_1,\ldots,y_l \rangle$.
Given two functions $f,g: A \rightarrow \mathbb{N}^*$, we define their \emph{concatenation} $f \oplus g: A \rightarrow \mathbb{N}^*$ by setting $(f \oplus g)(a) = f(a) \oplus g(a)$ for all $a \in A$.
We denote by $\mathbb{N}^{\le k} \subseteq \mathbb{N}^*$ the sequences with \emph{length at most} $k$.
We denote by $(\mathbb{N}^*,>_\lex)$ the \emph{lexicographic order}, where $\langle x_1,\ldots,x_k \rangle >_\lex \langle y_1,\ldots,y_l \rangle$ iff there is an index $1 \le i \le \min\{k,l\}$ such that $x_i > y_i$ and $x_j = y_j$ for all $1 \le j < i$.
We remark that $(\mathbb{N}^*,>_\lex)$ is not well-founded, but that every restriction $(\mathbb{N}^{\le k},>_\lex)$ to sequences with length at most $k$ is well-founded.

\begin{definition}
Let $(\domain, >)$ be a well-founded domain and let $\vass$ be a VASS.
We call a function $\rank: \states(\vass) \rightarrow \domain$ a \emph{ranking function} for $\vass$,
if for every step $\state \xrightarrow{\update} \state'$ of $\vass$ we have $\rank(\state) > \rank(\state')$;
we call $\rank$ a \emph{lexicographic ranking function}, if $(\domain, >) = (\mathbb{N}^{\le k},>_\lex)$ for some $k$, and say that
$\rank$ is of \emph{order} $k$.
We call a function $\qrank: \states(\vass) \rightarrow \domain$ a \emph{quasi-ranking function} for $\vass$,
if for every step $\state \xrightarrow{\update} \state'$ of $\vass$ we have $\qrank(\state) \ge \qrank(\state')$;
we call $\qrank$ \emph{ranking} for step $\state \xrightarrow{\update} \state'$, if $\qrank(\state) >_ \qrank(\state')$.
\end{definition}

We will make use of two (quasi-)ranking function combinators.
Their correctness proofs are straightforward and can be found in the appendix.

\begin{lemma}
\label{lem:sum-ranking}
Let $\decreasing \subseteq \transitions(\vass)$ be a subset of transitions such that for each transition $\transition \in \decreasing$ there is a quasi-ranking function $\qrank_\transition:  \states(\vass) \rightarrow \mathbb{N}$ that is ranking for $\transition$.
Then, $\sumRank((\qrank_\transition)_{\transition \in \decreasing}):  \states(\vass) \rightarrow \mathbb{N}$, defined by $\sumRank((\qrank_\transition)_{\transition \in \decreasing})(\loc,\val) = \sum_{\transition \in \decreasing} \qrank_\transition(\loc,\val)$, is a quasi-ranking function which is ranking for all $\transition \in \decreasing$.
\end{lemma}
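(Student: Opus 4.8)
The plan is to verify the two defining properties of a quasi-ranking function directly from the definition of $\sumRank$, relying on the fact that a finite sum of order-preserving maps is order-preserving on $\mathbb{N}$. Let $\state \xrightarrow{\update} \state'$ be an arbitrary step of $\vass$. First I would establish the quasi-ranking inequality: since each $\qrank_\transition$ is a quasi-ranking function, we have $\qrank_\transition(\state) \ge \qrank_\transition(\state')$ for every $\transition \in \decreasing$; summing these inequalities over the finite index set $\decreasing$ yields $\sumRank((\qrank_\transition)_{\transition \in \decreasing})(\state) = \sum_{\transition \in \decreasing} \qrank_\transition(\state) \ge \sum_{\transition \in \decreasing} \qrank_\transition(\state') = \sumRank((\qrank_\transition)_{\transition \in \decreasing})(\state')$. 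This uses only that addition on $\mathbb{N}$ is monotone in each argument, so no subtlety arises here.

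Next I would handle the ``ranking for all $\transition \in \decreasing$'' part. Fix a step $\state \xrightarrow{\update} \state'$ that uses a transition $\transition_0 \in \decreasing$. By hypothesis $\qrank_{\transition_0}$ is ranking for $\transition_0$, so the strict inequality $\qrank_{\transition_0}(\state) > \qrank_{\transition_0}(\state')$ holds, while for every other $\transition \in \decreasing$ the weak inequality $\qrank_\transition(\state) \ge \qrank_\transition(\state')$ holds as above. Adding the one strict and the remaining weak inequalities over $\decreasing$ gives a strict inequality $\sum_{\transition \in \decreasing} \qrank_\transition(\state) > \sum_{\transition \in \decreasing} \qrank_\transition(\state')$, i.e. $\sumRank((\qrank_\transition)_{\transition \in \decreasing})(\state) > \sumRank((\qrank_\transition)_{\transition \in \decreasing})(\state')$, which is exactly the claim that $\sumRank((\qrank_\transition)_{\transition \in \decreasing})$ is ranking for $\transition_0$. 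Since $\transition_0$ was an arbitrary element of $\decreasing$, this completes the argument.

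There is essentially no main obstacle: the proof is a one-line monotonicity observation applied twice, once with all weak inequalities and once with a single strict inequality in the sum. The only points that deserve a remark are that $\decreasing$ is finite (so the sums are well defined and finitely many inequalities are added) and that the codomain is $\mathbb{N}$ with its usual well-founded order, which guarantees $\sumRank((\qrank_\transition)_{\transition \in \decreasing})$ indeed maps into a well-founded domain as required by the definition of a quasi-ranking function. I would state the argument in two short paragraphs mirroring the two bullet points above and note that the well-foundedness of $(\mathbb{N},>)$ is what makes the resulting function a legitimate quasi-ranking function.
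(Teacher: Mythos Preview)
Your proposal is correct and follows essentially the same argument as the paper: take an arbitrary step, sum the weak inequalities $\qrank_\transition(\state) \ge \qrank_\transition(\state')$ over $\transition \in \decreasing$ to obtain the quasi-ranking property, and observe that when the step uses some $\transition_0 \in \decreasing$ one summand is strict, making the sum strict. The paper's proof is slightly terser but structurally identical.
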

\begin{proof}
Let $(\loc_1,\val_1) \xrightarrow{\update} (\loc_2,\val_2)$ be a step of VASS.
By the definition of a step, we have $\loc_1 \xrightarrow{\update} \loc_2 \in \transitions(\vass)$.

Let $\transition \in \decreasing$ be some transition.
By the definition of a quasi-ranking function, we have $\qrank_\transition(\loc_1,\val_1) \ge \qrank_\transition(\loc_2,\val_2)$ (1).
Moreover, the inequality is strict if $\transition = \loc_1 \xrightarrow{\update} \loc_2$ (2).

Because (1) holds for every transition $\transition \in \decreasing$, we have $\sumRank((\qrank_\transition)_{\transition \in \decreasing})(\loc_1,\val_1) =  \sum_{\transition \in \decreasing} \qrank_\transition(\loc_1,\val_1) \ge  \sum_{\transition \in \decreasing} \qrank_\transition(\loc_2,\val_2)  = \sumRank((\qrank_\transition)_{\transition \in \decreasing})(\loc_2,\val_2)$.
Because (2) holds for every transition $\transition \in \decreasing$, the inequality is strict if $\loc_1 \xrightarrow{\update} \loc_2 \in \decreasing$.
\end{proof}

\begin{lemma}
\label{lem:combine-ranking}
Let $\decreasing \subseteq \transitions(\vass)$ be a subset of the transitions of $\vass$ such that $\vass' = (\locations(\vass), \transitions(\vass) \setminus \decreasing)$ can be fully decomposed into SCCs.
Let $\qrank: \states(\vass) \rightarrow \mathbb{N}$ be a quasi-ranking function for $\vass$, which is ranking for each transition in $\decreasing$.
For each SCC $\scc$ of $\vass'$, let $\rank_\scc: \states(\vass) \rightarrow \mathbb{N}^*$ be a lexicographic ranking function $\scc$.
Then, $\sccRank(\qrank,(\rank_\scc)_{\scc \in \sccs(\vass')}): \states(\vass) \rightarrow \mathbb{N}^*$,
defined by $\sccRank(\qrank,(\rank_\scc)_{\scc \in \sccs(\vass')})(\loc,\val) = \qrank(\loc,\val) \oplus \rank_\scc(\loc,\val)$ for $\loc \in \locations(\scc)$, is a lexicographic ranking function for $\vass$.
\end{lemma}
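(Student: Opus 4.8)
The plan is to verify the defining inequality of a lexicographic ranking function directly on an arbitrary step $(\loc_1,\val_1) \xrightarrow{\update} (\loc_2,\val_2)$ of $\vass$, splitting on whether the underlying transition $\loc_1 \xrightarrow{\update} \loc_2$ belongs to $\decreasing$ or not. Before the case analysis I would record two routine observations. First, $\sccRank(\qrank,(\rank_\scc)_{\scc})$ is well defined: since $\vass'$ can be fully decomposed into SCCs, the location sets of the SCCs of $\vass'$ partition $\locations(\vass)$, so for each $\loc$ there is a unique $\scc \in \sccs(\vass')$ with $\loc \in \locations(\scc)$. Second, writing $k = 1 + \max_{\scc \in \sccs(\vass')} \mathrm{ord}(\rank_\scc)$ (the orders are finite and there are finitely many SCCs), the function $\sccRank(\qrank,(\rank_\scc)_{\scc})$ maps $\states(\vass)$ into $\mathbb{N}^{\le k}$, so $(\mathbb{N}^{\le k},>_\lex)$ is a well-founded domain and the notion of lexicographic ranking function of order $k$ applies.

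The one small combinatorial fact I would isolate is monotonicity of concatenation with a single prepended entry: for $x,y \in \mathbb{N}$ and $s,s' \in \mathbb{N}^*$ we have $\langle x \rangle \oplus s >_\lex \langle y \rangle \oplus s'$ whenever $x > y$ (take the witnessing index $i=1$ in the definition of $>_\lex$), and $\langle x \rangle \oplus s >_\lex \langle x \rangle \oplus s'$ whenever $s >_\lex s'$ (shift the witnessing index of $s >_\lex s'$ by one). This reduces everything to comparing the first components $\qrank(\loc_1,\val_1)$ and $\qrank(\loc_2,\val_2)$ and, in case of a tie, the tails $\rank_\scc(\loc_1,\val_1)$ and $\rank_\scc(\loc_2,\val_2)$.

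\emph{Case 1:} $\loc_1 \xrightarrow{\update} \loc_2 \in \decreasing$. Since $\qrank$ is ranking for every transition in $\decreasing$, we get $\qrank(\loc_1,\val_1) > \qrank(\loc_2,\val_2)$, and the first monotonicity fact yields $\sccRank(\qrank,(\rank_\scc)_{\scc})(\loc_1,\val_1) >_\lex \sccRank(\qrank,(\rank_\scc)_{\scc})(\loc_2,\val_2)$ irrespective of the tails (which may here be attached to different SCCs). \emph{Case 2:} $\loc_1 \xrightarrow{\update} \loc_2 \in \transitions(\vass')$. Because $\vass'$ can be fully decomposed into SCCs, there is an SCC $\scc$ of $\vass'$ with $\loc_1 \xrightarrow{\update} \loc_2 \in \transitions(\scc)$; in particular $\loc_1,\loc_2 \in \locations(\scc)$, so the same $\rank_\scc$ is used at both states and $(\loc_1,\val_1) \xrightarrow{\update} (\loc_2,\val_2)$ is a step of $\scc$. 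Since $\qrank$ is a quasi-ranking function for $\vass$, $\qrank(\loc_1,\val_1) \ge \qrank(\loc_2,\val_2)$. If this inequality is strict, conclude as in Case 1; if $\qrank(\loc_1,\val_1) = \qrank(\loc_2,\val_2)$, use that $\rank_\scc$ is a lexicographic ranking function for $\scc$ to obtain $\rank_\scc(\loc_1,\val_1) >_\lex \rank_\scc(\loc_2,\val_2)$ and apply the second monotonicity fact.

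I do not expect a genuine obstacle; the proof is essentially bookkeeping. The only point requiring a little care is the well-definedness and same-SCC argument in Case 2: one must invoke "fully decomposable into SCCs" to know that the transition itself — hence both of its endpoints — sits inside a single SCC, rather than only concluding that each endpoint individually lies in some SCC. The remaining care is ensuring the order $k$ is finite, so that the target domain is indeed well-founded and the statement "lexicographic ranking function" is literally justified.
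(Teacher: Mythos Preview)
Your proposal is correct and follows essentially the same approach as the paper: a case split on whether the underlying transition lies in $\decreasing$, using $\qrank$ strictly in the first case and the SCC ranking function $\rank_\scc$ in the second. You add welcome bookkeeping (well-definedness, finiteness of the order, explicit monotonicity of concatenation, and the sub-split on strict vs.\ non-strict $\qrank$) that the paper leaves implicit, but the core argument is identical.
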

\begin{proof}
Let $(\loc_1,\val_1) \xrightarrow{\update} (\loc_2,\val_2)$ be a step of VASS $\vass$.
By the definition of a step, we have $\loc_1 \xrightarrow{\update} \loc_2 \in \transitions(\vass)$.

If $\loc_1 \xrightarrow{\update} \loc_2 \in \decreasing$, we have $\qrank(\loc_1,\val_1) > \qrank(\loc_2,\val_2)$.
Thus, the first component of $\sccRank(\qrank,(\rank_\scc)_{\scc \in \sccs(\vass')})$ decreases and we get\\
$\sccRank(\qrank,(\rank_\scc)_{\scc \in \sccs(\vass')})(\loc_1,\val_1) > \sccRank(\qrank,(\rank_\scc)_{\scc \in \sccs(\vass')})(\loc_2,\val_2)$.

If $\loc_1 \xrightarrow{\update} \loc_2 \not\in \decreasing$ then, we have $\qrank(\loc_1,\val_1) \ge \qrank(\loc_2,\val_2)$.
By assumption, $\loc_1 \xrightarrow{\update} \loc_2$ belongs to some SCC $\scc$ of $\vass'$.
Hence, we have $\rank_\scc(\loc_1,\val_1) > \rank_\scc(\loc_2,\val_2)$.
Thus, we can conclude $\sccRank(\qrank,(\rank_\scc)_{\scc \in \sccs(\vass')})(\loc_1,\val_1) = \langle \qrank(\loc_1,\val_1), \rank_\scc(\loc_1,\val_1) \rangle  >  \langle \qrank(\loc_2,\val_2), \rank_\scc(\loc_2,\val_2) \rangle =$\\
$\sccRank(\qrank,(\rank_\scc)_{\scc \in \sccs(\vass')})(\loc_2,\val_2)$.
\end{proof}

\section{Witness for Non-Termination}
\label{sec:witness-non-termination}

Next we state a criterion for non-termination of a VASS.

\begin{definition}[Witness for non-termination]
Let $\vass$ be a VASS.
We call a non-negative cycle a \emph{witness for non-termination}.
\end{definition}

In order to show that non-negative cycles can be repeated infinitely often, we make use of the following monotonicity property of VASSs:
\begin{proposition}
\label{prop:monotonicity}
Let $\vass$ be a VASS and let $(\loc_0,\val_0) \xrightarrow{\update_0} (\loc_1,\val_1) \xrightarrow{\update_1} \cdots$ be a trace of $\vass$.
Given a valuation $\val_0' \ge \val_0$, also $(\loc_0,\val_0') \xrightarrow{\update_0} (\loc_1,\val_1') \xrightarrow{\update_1} \cdots$ is a trace of $\vass$.
\end{proposition}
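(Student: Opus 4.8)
The plan is to prove this by a straightforward induction on the position in the trace, tracking the constant difference vector $\difference = \val_0' - \val_0$. Since $\val_0' \ge \val_0$ we have $\difference \ge \zeroVec$. First I would define the candidate shifted valuations by $\val_i' = \val_i + \difference$ for all $i$; this is consistent with the given $\val_0' = \val_0 + \difference$, so the claim reduces to showing that the sequence $(\loc_0,\val_0') \xrightarrow{\update_0} (\loc_1,\val_1') \xrightarrow{\update_1} \cdots$ obtained in this way is indeed a trace of $\vass$.

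The key steps are then: (i) verify that each $\val_i'$ is a valid valuation, i.e.\ $\val_i' \in \Val(\vass) = \mathbb{N}^{\dimOp(\vass)}$; since the original sequence is a trace we have $\val_i \ge \zeroVec$, and together with $\difference \ge \zeroVec$ this yields $\val_i' = \val_i + \difference \ge \zeroVec$. (ii) Verify that each consecutive pair is a step: the location pair $(\loc_i,\loc_{i+1})$ and the update $\update_i$ are unchanged, so $\loc_i \xrightarrow{\update_i} \loc_{i+1} \in \transitions(\vass)$ still holds, and moreover $\val_{i+1}' = \val_{i+1} + \difference = (\val_i + \update_i) + \difference = \val_i' + \update_i$, which is exactly the arithmetic condition in the definition of a step. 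Combining (i) and (ii), every triple $((\loc_i,\val_i'),\update_i,(\loc_{i+1},\val_{i+1}'))$ is a step of $\vass$, hence the whole sequence is a trace. The reasoning is uniform in $i$, so it applies without change to both finite and infinite traces.

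There is essentially no real obstacle here; the only point that needs attention is the non-negativity constraint baked into the definition of valuations, which is precisely why the hypothesis $\val_0' \ge \val_0$ is needed rather than an arbitrary $\val_0'$: adding a componentwise non-negative vector $\difference$ to an already non-negative valuation keeps every component non-negative. One could equivalently run an explicit induction with hypothesis ``$\val_i' = \val_i + \difference$ and $\val_i' \in \Val(\vass)$'' and establish the inductive step directly, but isolating the closed-form identity $\val_i' = \val_i + \difference$ first makes the bookkeeping trivial.
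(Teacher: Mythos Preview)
Your proposal is correct and follows essentially the same approach as the paper: the paper's proof is a one-line sketch stating that a straightforward induction gives $\val_i' \ge \val_i$ for all $i$, hence $\val_i' \ge 0$ and the shifted sequence is a trace. Your argument makes this explicit by tracking the constant offset $\difference = \val_0' - \val_0$, which is exactly the content of that induction.
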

\begin{proof}
It is a straightforward proof by induction to show that $\val_i' \ge \val_i$ for all $i$.
Hence, $\val_i' \ge 0$ for  all $i$ and $(\loc_0,\val_0') \xrightarrow{\update_0} (\loc_1,\val_1') \xrightarrow{\update_1} \cdots$ is a trace of $\vass$.
\end{proof}

Next we state that a non-negative cycle indeed allows to show non-termination:

\begin{lemma}
\label{lem:cycle-non-termination}
Let $\vass$ be a VASS which has a witness for non-termination.
Then, $\vass$ does not terminate.
\end{lemma}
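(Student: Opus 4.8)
The plan is to exhibit an explicit infinite trace of $\vass$. By hypothesis there is a witness for non-termination, i.e.\ a non-negative cycle $\cycle = \loc_0 \xrightarrow{\update_1} \loc_1 \xrightarrow{\update_2} \cdots \xrightarrow{\update_k} \loc_0$ with $\valueSum(\cycle) = \sum_{i \in [1,k]} \update_i \ge 0$. The natural idea is to pick a large enough starting valuation so that traversing $\cycle$ once never drives any component below $0$, and then use non-negativity of $\valueSum(\cycle)$ together with the monotonicity property (Proposition~\ref{prop:monotonicity}) to argue that the valuation at the start of each successive traversal is componentwise at least the one at the start of the previous traversal, so the trace can be continued forever.

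Concretely, first I would choose a valuation $\val_0$ large enough to execute $\cycle$ once as a trace: set $b = \max_{j \in [0,k]} \norm{\valueSum(\loc_0 \xrightarrow{\update_1} \cdots \loc_j)}$ (the largest absolute value occurring among the partial sums of the prefixes of $\cycle$), and take $\val_0 = b \cdot \oneVec$, or any valuation with all components $\ge b$. Then every prefix valuation $\val_0 + \valueSum(\loc_0 \xrightarrow{\update_1} \cdots \loc_j)$ is $\ge \zeroVec$, so $(\loc_0,\val_0) \xrightarrow{\update_1} (\loc_1,\val_1) \xrightarrow{\update_2} \cdots \xrightarrow{\update_k} (\loc_0,\val_k)$ is a genuine trace of $\vass$, and by construction $\val_k = \val_0 + \valueSum(\cycle) \ge \val_0$ since $\valueSum(\cycle) \ge \zeroVec$.

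Next I would bootstrap this into an infinite trace. Having the finite trace $(\loc_0,\val_0) \xrightarrow{} \cdots \xrightarrow{} (\loc_0,\val_k)$ with $\val_k \ge \val_0$, I apply Proposition~\ref{prop:monotonicity} to the trace starting at $(\loc_0,\val_0)$: since $\val_k \ge \val_0$, the same sequence of updates is a valid trace from $(\loc_0,\val_k)$, ending in some valuation $\val_{2k} = \val_k + \valueSum(\cycle) \ge \val_k \ge \val_0$. Iterating, I obtain for every $m$ a trace of length $mk$ that traverses $\cycle$ exactly $m$ times and returns to location $\loc_0$ with valuation $\val_{mk} = \val_0 + m\cdot\valueSum(\cycle) \ge \val_0$. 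Formally one does an induction on $m$: the inductive step extends the length-$mk$ trace by one more traversal of $\cycle$, which is legal because the current valuation $\val_{mk} \ge \val_0$ and Proposition~\ref{prop:monotonicity} transports the ``one traversal'' trace from $\val_0$ to $\val_{mk}$. Taking the union (limit) of these finite traces gives an infinite trace of $\vass$, so $\vass$ does not terminate.

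I do not expect a serious obstacle here; the only point requiring a little care is the bookkeeping of partial sums when choosing $\val_0$ — one must ensure \emph{intermediate} valuations within a single traversal of the cycle stay non-negative, not merely the valuation at the cycle's endpoints — but this is exactly handled by bounding all prefix partial sums of $\cycle$. The rest is a routine induction invoking monotonicity, and the non-negativity $\valueSum(\cycle) \ge \zeroVec$ is what guarantees the invariant $\val_{mk} \ge \val_0$ is preserved so the process never gets stuck.
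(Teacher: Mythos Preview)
Your proposal is correct and follows essentially the same approach as the paper: choose a sufficiently large starting valuation so that one traversal of the non-negative cycle is a valid trace, observe that the end valuation dominates the start valuation because $\valueSum(\cycle)\ge 0$, and then iterate via Proposition~\ref{prop:monotonicity}. The only cosmetic difference is the choice of the initial bound---the paper uses $ml\cdot\oneVec$ with $m=\max_i\norm{\update_i}$ rather than your maximum over prefix partial sums---but the structure of the argument is identical.
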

\begin{proof}
  Let $\cycle = \loc_0 \xrightarrow{\update_1} \loc_1 \xrightarrow{\update_2} \cdots \xrightarrow{\update_l} \loc_0$ be a non-negative cycle of $\vass$, i.e, $\valueSum(\cycle) = \sum_{i \in [1,l]} \update_i \ge 0$.
  Let $m = \max_{i \in [1,l]} \norm{\update_i}$ be the maximal absolute value of all update entries.
  We define $\val_0 = ml \cdot \oneVec$ to be the vector with entry $ml$ in every component.
  We now claim that $\vass$ has an infinite trace from state $(\loc_0, \val_0)$.
  We set $\val_k = ml \cdot \oneVec + \sum_{i \in [1,k]} \update_i$ for all $k \in [0,l]$.
  By induction we have $\val_k \ge m(l - k) \cdot \oneVec \ge 0$ for all $k \in [0,l]$.
  Hence, $(\loc_0,\val_0) \xrightarrow{\update_1} (\loc_1,\val_1) \xrightarrow{\update_2} \cdots \xrightarrow{\update_l} (\loc_0,\val_l)$is a trace of $\vass$.
  We set $\val_0' = \val_l$.
  We have $\val_0' = ml \cdot \oneVec + \sum_{i \in [1,l]} \update_i = ml \cdot \oneVec + \valueSum(\cycle) \ge ml \cdot \oneVec = \val_0$.
  By Proposition~\ref{prop:monotonicity}, $\vass$ has a trace $(\loc_0,\val_0') \xrightarrow{\update_1} (\loc_1,\val_1') \xrightarrow{\update_2} \cdots \xrightarrow{\update_l} (\loc_0,\val_0'')$.
  Again, we have  $\val_0'' \ge \val_0'$.
  This argument can be repeated to obtain an infinite trace $(\loc_0,\val_0) \xrightarrow{\update_1} \cdots \xrightarrow{\update_l} (\loc_0,\val_0') \xrightarrow{\update_1} \cdots \xrightarrow{\update_l} (\loc_0,\val_0'') \cdots$.
\end{proof}

We remark that the criterion of a non-negative cycle is complete for non-termination.
Using Dickson's lemma, one can easily show that a non-terminating trace is always of shape $\cdots (\loc_i,\val_i) \xrightarrow{\update_i} \cdots (\loc_j,\val_j) \xrightarrow{\update_j} \cdots$ for some location $\loc_i = \loc_j$ and valuations $\val_i \le \val_j$ and hence contains a non-negative cycle.
The development in this paper represents an alternative proof:
We obtain the existence of non-negative cycles solely by arguments from linear programming. 
\section{Farkas' Lemma and the Constraint System of Kosaraju and Sullivan}
\label{sec:farkas}

We will use the following variant of \emph{Farkas' Lemma}, which states that given matrices $A$,$C$ and vectors $b$,$d$, exactly one of the following statements is true:

\vspace{0.3cm}
\begin{tabular}{|c|c|}
\hline
\begin{minipage}[c]{0.4\linewidth}
constraint system ($A$):

\vspace{0.2cm}
there exists $x$ with

\vspace{0.2cm}
$\begin{array}{rcr}
   Ax & \ge & b \\
   Cx & = & d
 \end{array}$

\end{minipage}
  &
\begin{minipage}[c]{0.5\linewidth}
\vspace{0.2cm}
constraint system ($B$):

\vspace{0.2cm}
there exist $y,z$ with

\vspace{0.2cm}
$\begin{array}{rcr}
   y & \ge & 0 \\
   A^T y + C^T z & = & 0 \\
   b^T y + d^T z & > & 0
 \end{array}$
\vspace{0.2cm}
\end{minipage} \\

\hline
\end{tabular}
\vspace{0.2cm}

We will make use of the following convention throughout the paper:
Let $\vass$ be a VASS.
We define the matrix $\updates \in \mathbb{Z}^{\dimOp(\vass) \times \transitions(\vass)}$ by setting $\updates(\transition) = \update$ for all transitions $\transition = (\loc,\update,\loc') \in \locations(\vass)$.
We define the \emph{oriented incidence matrix} $\flowMatrix \in \mathbb{Z}^{\locations(\vass) \times \transitions(\vass)}$ by setting $\flowMatrix(\loc,\transition) = 1$ resp. $\flowMatrix(\loc,\transition) = -1$, if $\transition = (\loc,\update,\loc')$ resp. $\transition = (\loc',\update,\loc)$ and $\loc' \neq \loc$, and $\flowMatrix(\loc,\transition) = 0$, otherwise.
We note that every column $\transition$ of $\flowMatrix$ either contains exactly one $-1$ and $1$ entry (in case the source and target of transition~$\transition$ are different) or only $0$ entries (in case the source and target of transition~$\transition$ are the same).

We now consider the constraint systems ($\conSysA_\transition$) and ($\conSysB_\transition$) stated below.
Both constraint systems are parameterized by a transition $\transition \in \transitions(\vass)$.
Both constraint systems can be solved in PTIME over the integers, because every rational solution obtained by linear programming over the rationals can scaled to an integer solution by multiplying the solution with the least common multiple of the denominators.
Constraint system ($\conSysA_\transition$) is taken from Kosaraju and Sullivan~\cite{conf/stoc/KosarajuS88} (we note only Equation (\ref{multcycle:eq4}) is parameterized by $\transition$).

\vspace{0.3cm}
\begin{tabular}{|c|c|}
\hline
 {\begin{minipage}[c]{0.4\linewidth}
\vspace{0.2cm}
constraint system ($\conSysA_\transition$):

\vspace{0.2cm}
there exists $\counters \in \mathbb{Z}^{\transitions(\vass)}$ with
\begin{align}
  \updates \counters & \ge 0 \label{multcycle:eq1}\\
  \counters & \ge 0 \label{multcycle:eq2} \\
  \flowMatrix \counters & = 0 \label{multcycle:eq3}\\
  \counters(\transition) & \ge 1 \label{multcycle:eq4}
\end{align}
\vspace{-0.5cm}
\end{minipage}}
  &
{\begin{minipage}[c]{0.5\linewidth}
constraint system ($\conSysB_\transition$):

\vspace{0.2cm}
there exist\\
$\rankCoeff \in \mathbb{Z}^{\dimOp(\vass)},\offsets \in \mathbb{Z}^{\locations(\vass)}$ with
\begin{align}
\rankCoeff & \ge  0 \nonumber\\
\offsets & \ge  0 \nonumber\\
\updates^T \rankCoeff + \flowMatrix^T \offsets & \le 0 \text{ with } < \text{ in line } \transition \label{multcycle:eq5}
\end{align}
\end{minipage}}\\
\hline
\end{tabular}
\vspace{0.3cm}

\begin{lemma}[Cited from~\cite{conf/stoc/KosarajuS88}]
\label{lem:solution-is-multcycle}
There is a solution $\counters \in \mathbb{Z}^{\transitions(\vass)}$  to constraints (\ref{multcycle:eq1})-(\ref{multcycle:eq3}) iff there exists a non-negative multi-cycle with $\counters(\transition)$ instances of transition~$\transition$ for each  $\transition \in \transitions(\vass)$.
\end{lemma}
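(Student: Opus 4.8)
The plan is to prove both directions by translating the vector $\counters$ into a directed multigraph on the locations of $\vass$. For the ``only if'' direction, suppose $\counters$ satisfies (\ref{multcycle:eq1})--(\ref{multcycle:eq3}). Since $\counters \ge 0$ is integral by (\ref{multcycle:eq2}), I can form the directed multigraph $G$ whose vertex set is $\locations(\vass)$ and which contains $\counters(\transition)$ parallel copies of the edge $\loc \to \loc'$ for each transition $\transition = (\loc,\update,\loc') \in \transitions(\vass)$. The key observation is that constraint (\ref{multcycle:eq3}), i.e.\ $\flowMatrix\counters = 0$, says precisely that at every vertex the number of outgoing edge-copies equals the number of incoming edge-copies: by the definition of $\flowMatrix$, the column of a non-loop transition $\loc \to \loc'$ has $+1$ at $\loc$ and $-1$ at $\loc'$, while a loop transition contributes the zero column and is trivially balanced (and is itself already a cycle of $\vass$). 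Hence $G$ satisfies the Eulerian balance condition at every vertex.

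Next I would decompose $G$ into directed cycles of $\vass$ by induction on the number of edges of $G$. If $G$ has no edges, the empty multi-cycle suffices. Otherwise, pick any vertex with positive out-degree and walk forward, never reusing an edge; at any vertex $x$ other than the start, each visit consumes one more in-edge than out-edge, so by the balance condition an unused out-edge is always available and the walk cannot get stuck before returning to its start. As $G$ is finite this yields a directed cycle $C$ of $\vass$. Deleting the edges of $C$ from $G$ decreases the edge count and preserves the balance condition, so by induction the remainder decomposes into cycles; adding $C$ gives a multi-cycle $\multicycle$ that uses each transition $\transition$ exactly $\counters(\transition)$ times. Finally, $\valueSum(\multicycle) = \sum_{\transition \in \transitions(\vass)} \counters(\transition)\cdot\updates(\transition) = \updates\counters \ge 0$ by (\ref{multcycle:eq1}), so $\multicycle$ is a non-negative multi-cycle of the required shape.

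For the ``if'' direction, suppose $\multicycle$ is a non-negative multi-cycle using transition $\transition$ exactly $\counters(\transition)$ times. Then $\counters \ge 0$, which is (\ref{multcycle:eq2}). For each individual cycle $C$ of $\multicycle$ and each location $\loc$, the transitions of $C$ leaving $\loc$ are in bijection with those entering $\loc$ (walk around $C$ once), so $\flowMatrix$ applied to the transition-count vector of $C$ is $0$; summing over the cycles of $\multicycle$ gives $\flowMatrix\counters = 0$, which is (\ref{multcycle:eq3}). And $\updates\counters = \sum_{\transition \in \transitions(\vass)} \counters(\transition)\cdot\updates(\transition) = \sum_{C \in \multicycle} \valueSum(C) = \valueSum(\multicycle) \ge 0$ since $\multicycle$ is non-negative, which is (\ref{multcycle:eq1}).

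The balance and bijection arguments are routine bookkeeping with the definitions of $\flowMatrix$, $\updates$ and $\valueSum$. The one step that needs care is the cycle-extraction in the ``only if'' direction: making precise that a greedy forward walk cannot terminate prematurely, that a loop transition is correctly accounted for, and that deleting a cycle's edges genuinely preserves the induction hypothesis. I expect this to be the main obstacle, though it is a standard graph-theoretic argument; everything else follows directly from the preliminaries.
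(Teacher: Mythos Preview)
Your proof is correct and follows essentially the same approach as the paper: both directions hinge on the observation that $\flowMatrix\counters = 0$ is precisely the in-degree $=$ out-degree balance condition on the multigraph with $\counters(\transition)$ copies of each transition, and that $\updates\counters = \valueSum(\multicycle)$. The only difference is cosmetic: for the cycle decomposition in the ``$\Rightarrow$'' direction, the paper appeals to the Eulerian-cycle theorem on each strongly connected component of the balanced multigraph, whereas you extract cycles one at a time by a greedy forward walk and induct on the remaining edge count. Both are standard and equivalent ways to decompose a balanced directed multigraph into closed walks; your version is slightly more self-contained, the paper's slightly more concise.
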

\begin{proof}
"$\Leftarrow$":
We consider some non-negative multi-cycle $\multicycle$ with $\counters(\transition)$ instances of each transition~$\transition$.
Clearly, $\counters$ is non-negative, i.e., constraint (\ref{multcycle:eq2}) is satisfied.
Because of $\valueSum(\multicycle) \ge 0$ and $\valueSum(\multicycle) = \updates \counters$, constraint (\ref{multcycle:eq1}) is satisfied.
Because $\multicycle$ is a multi-cycle we have that $\counters$ satisfies the \emph{flow constraint} (\ref{multcycle:eq3}), which encodes for every location that the number of incoming transitions equals the number of out-going transitions for every location.

"$\Rightarrow$":
We assume that there exists a solution $\counters$ to constraints (\ref{multcycle:eq1})-(\ref{multcycle:eq3}).
From~(\ref{multcycle:eq2}) we have $\counters \ge 0$.
We now consider the multi-graph which contains $\counters(\transition)$ copies of every transition~$\transition$.
From~(\ref{multcycle:eq3}) we have that the multi-graph is balanced, i.e., the number of incoming edges equals the number of outgoing edges for every location.
It follows that every strongly connected component has an Eulerian cycle.
Each of these Eulerian cycles gives us a cycle in the original VASS.
The union of these cycles is the desired multi-cycle.
The multi-cycle is non-negative by~(\ref{multcycle:eq1}).
\end{proof}

\begin{corollary}
\label{cor:cycle-existence}
If constraints (\ref{multcycle:eq1})-(\ref{multcycle:eq3}) are satisfiable by some $\counters \ge \oneVec$ and VASS $\vass$ is connected, then there exists a non-negative cycle that includes every transition.
\end{corollary}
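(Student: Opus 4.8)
The plan is to invoke Lemma~\ref{lem:solution-is-multcycle} to turn the algebraic hypothesis into a combinatorial object — a non-negative multi-cycle $\multicycle$ in which every transition of $\vass$ appears at least once (since $\counters \ge \oneVec$ forces $\counters(\transition) \ge 1$ for every $\transition$) — and then to ``stitch'' the cycles of $\multicycle$ together into a single cycle, using connectedness of $\vass$. First I would apply Lemma~\ref{lem:solution-is-multcycle} to the given $\counters \ge \oneVec$: it yields a non-negative multi-cycle $\multicycle = \{\cycle_1,\ldots,\cycle_r\}$ with $\valueSum(\multicycle) = \sum_j \valueSum(\cycle_j) \ge 0$, and with the property that the disjoint union of the $\cycle_j$ uses each transition $\counters(\transition) \ge 1$ times, so in particular every transition of $\vass$ occurs in at least one $\cycle_j$.

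The main work is the stitching step. Pick any location $\loc_j$ on each cycle $\cycle_j$, so we may regard $\cycle_j$ as a closed walk $\loc_j \rightsquigarrow \loc_j$. Because $\vass$ is connected, for each $j$ there is a path $\paath_j$ from $\loc_1$ to $\loc_j$ and a path $\paath_j'$ from $\loc_j$ back to $\loc_1$. Form the cycle
\begin{equation*}
  \cycle = \paath_1 \cycle_1 \paath_1' \, \paath_2 \cycle_2 \paath_2' \, \cdots \, \paath_r \cycle_r \paath_r'
\end{equation*}
obtained by, for each $j$, going from $\loc_1$ to $\loc_j$ along $\paath_j$, traversing $\cycle_j$ once, and returning to $\loc_1$ along $\paath_j'$; this is a genuine cycle of $\vass$ starting and ending at $\loc_1$. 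It clearly includes every transition, since it contains each $\cycle_j$ as a factor and every transition lies on some $\cycle_j$. It remains to check that $\cycle$ is non-negative, i.e.\ $\valueSum(\cycle) \ge 0$. Here the subtlety is that the connecting paths $\paath_j, \paath_j'$ contribute values of possibly negative sign, so a single copy of $\multicycle$ need not dominate them. The fix is the standard trick of taking many copies of the multi-cycle: since $\valueSum(\multicycle) \ge 0$, inserting $K$ repetitions of each $\cycle_j$ in place of one (for $K$ large enough, depending on $\norm{\valueSum(\paath_j)}$, $\norm{\valueSum(\paath_j')}$ and $r$) makes the total value $\sum_j \bigl(\valueSum(\paath_j) + K\,\valueSum(\cycle_j) + \valueSum(\paath_j')\bigr) = \sum_j\bigl(\valueSum(\paath_j)+\valueSum(\paath_j')\bigr) + K\,\valueSum(\multicycle)$ dominated componentwise — wait: this only works if $\valueSum(\multicycle) > 0$ strictly in every component.

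To handle the general case $\valueSum(\multicycle) \ge 0$ cleanly, I would instead argue at the level of the counting vector: define a new vector $\counters'$ on transitions by taking $K$ copies of $\multicycle$ (i.e.\ $K\counters$, which still satisfies (\ref{multcycle:eq1})--(\ref{multcycle:eq3})) and then adding, for each $j$, one copy of each transition on $\paath_j$ and on $\paath_j'$. One checks that $\counters'$ still satisfies the flow constraint (\ref{multcycle:eq3}) — because appending $\paath_j\paath_j'$ contributes a closed walk at $\loc_1$ — and $\counters' \ge 0$, and that for $K$ sufficiently large, $\updates\counters' = K\,\valueSum(\multicycle) + \sum_j(\valueSum(\paath_j)+\valueSum(\paath_j')) \ge 0$ componentwise: indeed in any component where $\valueSum(\multicycle)$ is strictly positive a large $K$ dominates, and in any component where $\valueSum(\multicycle)$ is zero the bound $\sum_j(\valueSum(\paath_j)+\valueSum(\paath_j')) \ge 0$ need not hold a priori, so one actually needs $\valueSum(\multicycle)$ strictly positive there — which is not guaranteed. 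The honest resolution is that connectedness together with $\counters \ge \oneVec$ already gives us, without enlarging, a multi-graph that is connected and balanced, hence has a single Eulerian circuit: rerun the ``$\Rightarrow$'' direction of Lemma~\ref{lem:solution-is-multcycle}, but observe that since $\counters \ge \oneVec$ covers every transition and $\vass$ is connected, the entire multi-graph of $\counters$ copies is connected, so it has one Eulerian cycle $\cycle$ spanning all transitions, and $\valueSum(\cycle) = \updates\counters \ge 0$ by (\ref{multcycle:eq1}). That is the cleanest route, and it is the one I would write up; the Eulerian-circuit argument is the key step, and its only real content is the observation that $\counters \ge \oneVec$ plus connectedness of $\vass$ makes the associated multi-graph connected.
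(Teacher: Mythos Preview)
Your final Eulerian-circuit argument is correct and is precisely what the paper's terse ``connect the cycles of the multi-cycle into a single cycle'' is getting at: since $\counters \ge \oneVec$ forces every transition of $\vass$ to appear in the balanced multi-graph and $\vass$ is connected, that multi-graph is itself connected, so it admits a single Eulerian circuit whose value is exactly $\updates\counters \ge 0$. Your own diagnosis of why the stitching-with-connecting-paths attempt fails (the extra paths can destroy non-negativity and scaling by $K$ only helps in components where $\valueSum(\multicycle)$ is strictly positive) is sound, so in the write-up drop those false starts and go straight to the Eulerian route.
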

\begin{proof}
Let $\counters \ge \oneVec$ be a solution to constraints (\ref{multcycle:eq1})-(\ref{multcycle:eq3}).
We consider the multi-cycle obtained from Lemma~\ref{lem:solution-is-multcycle}.
Because of $\counters \ge \oneVec$, the multi-cycle contains each transition at least once.
Because VASS $\vass$ is connected, we can connect the cycles of the multi-cycle into a single cycle.
\end{proof}

We now consider constraint system ($\conSysB_\transition$), which we recognize as the dual of constraint system ($\conSysA_\transition$)
in the following Lemma:

\begin{lemma}
\label{lem:ranking-or-witness}
Exactly one of the constraint systems ($\conSysA_\transition$) and ($\conSysB_\transition$) has a solution.
\end{lemma}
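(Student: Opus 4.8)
The plan is to obtain Lemma~\ref{lem:ranking-or-witness} as a direct instance of Farkas' Lemma, after rewriting ($\conSysA_\transition$) into the canonical shape of the system ($A$) in the statement of Farkas' Lemma and checking that the resulting alternative is exactly ($\conSysB_\transition$). It is convenient to argue over $\mathbb{Q}$ first: as already observed in the paper, for both ($\conSysA_\transition$) and ($\conSysB_\transition$) a rational solution can be scaled to an integral one (scaling by a positive integer preserves $\counters(\transition) \ge 1$ in ($\conSysA_\transition$), and preserves every constraint of ($\conSysB_\transition$) since they are all homogeneous), so for each system rational and integral solvability coincide.

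First I would read ($\conSysA_\transition$) as the system ($A$) of Farkas' Lemma, with $x = \counters$. Constraints (\ref{multcycle:eq1}), (\ref{multcycle:eq2}), (\ref{multcycle:eq4}) together form the block $A\counters \ge b$ with
\[
  A = \begin{pmatrix} \updates \\ \identity \\ e_\transition^T \end{pmatrix}, \qquad b = \begin{pmatrix} \zeroVec \\ \zeroVec \\ 1 \end{pmatrix},
\]
where $\identity$ is the $\transitions(\vass)\times\transitions(\vass)$ identity matrix and $e_\transition$ is the $\transition$-th unit vector (so $e_\transition^T\counters = \counters(\transition)$); the flow constraint (\ref{multcycle:eq3}) is the block $C\counters = d$ with $C = \flowMatrix$ and $d = \zeroVec$. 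Farkas' Lemma then states that exactly one of this system --- which is precisely ($\conSysA_\transition$) --- and the dual system ($B$), namely the existence of $y \ge 0$ and a \emph{free} vector $z$ with $A^T y + C^T z = 0$ and $b^T y + d^T z > 0$, has a solution over $\mathbb{Q}$.

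Next I would unfold the dual and match it with ($\conSysB_\transition$). Splitting $y$ along the three blocks of $A$ as $y = (\rankCoeff, s, \lambda)$ with $\rankCoeff, s, \lambda$ all nonnegative, the equation $A^T y + C^T z = 0$ becomes $\updates^T\rankCoeff + s + \lambda e_\transition + \flowMatrix^T z = \zeroVec$, and the objective $b^T y + d^T z > 0$ becomes simply $\lambda > 0$. Eliminating the slack vector $s \ge 0$, this system is equivalent to: there exist $\rankCoeff \ge 0$ and a free vector $z$ with $\updates^T\rankCoeff + \flowMatrix^T z \le 0$ and strict inequality in line $\transition$. (In one direction, set $s = -(\updates^T\rankCoeff + \flowMatrix^T z) - \lambda e_\transition$, which is nonnegative because the strict slack in line $\transition$ is at least $\lambda$; in the other, given such $\rankCoeff, z$, take $\lambda$ to be that strict slack and $s$ the remaining componentwise slack.) Finally, to match the requirement $\offsets \ge 0$ of ($\conSysB_\transition$) I would use that every column of $\flowMatrix$ sums to $0$, hence $\flowMatrix^T\oneVec = \zeroVec$: a free solution $z$ can be replaced by $\offsets = z + c\cdot\oneVec$, which is nonnegative for $c$ large enough and satisfies $\flowMatrix^T\offsets = \flowMatrix^T z$, so it still solves the inequality; the reverse inclusion is immediate. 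Thus system ($B$) is solvable over $\mathbb{Q}$ iff ($\conSysB_\transition$) is, and combined with the scaling remark the lemma follows.

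I expect the only genuine content to be the dual bookkeeping: identifying the three blocks of $A$ and their transposes, eliminating the slack variable $s$ to recover the inequality form of (\ref{multcycle:eq5}), and the $\oneVec$-shift that reconciles the free dual variable with the sign constraint $\offsets \ge 0$. Everything else is a direct invocation of Farkas' Lemma together with the rational-to-integral scaling already noted.
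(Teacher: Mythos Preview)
Your proposal is correct and follows essentially the same route as the paper: cast ($\conSysA_\transition$) into the canonical Farkas form, read off the dual, eliminate the nonnegative slack variable to obtain the inequality (\ref{multcycle:eq5}), and use the $\oneVec$-shift (via $\flowMatrix^T\oneVec = \zeroVec$) to force the free dual variable to be nonnegative. The only cosmetic difference is that the paper merges constraints (\ref{multcycle:eq2}) and (\ref{multcycle:eq4}) into the single block $\identity\counters \ge \character_\transition$, yielding a two-block dual variable $(\rankCoeff,y)$, whereas you keep three blocks and a separate scalar $\lambda$; the resulting dual computations are otherwise identical.
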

\begin{proof}
We fix some transition~$\transition$.
We denote by $\character_\transition \in \mathbb{Z}^{\locations(\vass)}$ the vector with $\character_\transition(\transition') = 1$, if $\transition' = \transition$, and $\character_\transition(\transition') = 0$, otherwise.
Using this notation we rewrite ($\conSysA_\transition$) to the equivalent constraint system ($\conSysA_\transition'$):

\vspace{0.2cm}
\begin{tabular}{|lc|}
\hline
{\begin{minipage}[r]{0.3\linewidth}
constraint system ($\conSysA_\transition'$):
\vspace{0.8cm}
\end{minipage}}
&
{\begin{minipage}[r]{0.3\linewidth}
\vspace{-0.2cm}
\begin{eqnarray*}
  \begin{pmatrix}
  \updates \\
  \identity
  \end{pmatrix}
  \counters & \ge &
  \begin{pmatrix}
  0 \\
  \character_\transition
  \end{pmatrix}\\
  \flowMatrix \counters & = & 0
\end{eqnarray*}
\vspace{-0.6cm}
\end{minipage}}\\
\hline
\end{tabular}
\vspace{0.2cm}

Using Farkas' Lemma, we see that either ($\conSysA_\transition'$) is satisfiable or the following constraint system ($\conSysB_\transition'$) is satisfiable:

\vspace{0.3cm}
\begin{tabular}{|r|c|}
\hline
{\begin{minipage}[r]{0.47\linewidth}
\vspace{0.2cm}
\hspace{-0.15cm}
constraint system ($\conSysB_\transition'$):
\vspace{-0.7cm}
\begin{eqnarray*}
  \begin{pmatrix}
  \rankCoeff \\
  y
  \end{pmatrix} & \ge & 0 \\
  \begin{pmatrix}
  \updates \\
  \identity
  \end{pmatrix}^T
  \begin{pmatrix}
  \rankCoeff \\
  y
  \end{pmatrix} + \flowMatrix^T \offsets & = & 0 \\
  \begin{pmatrix}
  0 \\
  \character_\transition
  \end{pmatrix}^T
  \begin{pmatrix}
  \rankCoeff \\
  y
  \end{pmatrix} + 0^T \offsets & > & 0
\end{eqnarray*}
\vspace{-0.3cm}
\end{minipage}}
  &
{\begin{minipage}[c]{0.45\linewidth}
constraint system ($\conSysB_\transition'$) simplified:
\begin{eqnarray*}
  \rankCoeff & \ge & 0 \\
  y & \ge & 0 \\
  \updates^T \rankCoeff + y + \flowMatrix^T \offsets & = & 0 \\
  y(\transition) & > & 0
\end{eqnarray*}
\end{minipage}}\\
\hline
\end{tabular}
\vspace{0.3cm}

We observe that solutions of constraint system ($\conSysB_\transition'$) are invariant under shifts of $\offsets$, i.e, if $\rankCoeff$, $y$, $\offsets$ is a solution, then $\rankCoeff$, $y$, $\offsets + c \cdot \oneVec$ is also a solution for all $c \in \mathbb{Z}$  (because every row of $\flowMatrix^T$ either contains exactly one $-1$ and $1$ entry or only $0$ entries).
Hence, we can force $\offsets$ to be non-negative.
We recognize that constraint systems ($\conSysB_\transition'$) and ($\conSysB_\transition$) are equivalent.
\end{proof}

Our interest in the dual constraint system ($\conSysB_\transition$) stems from the fact that we can obtain the building blocks for VASS ranking functions from a solution:

\begin{lemma}
\label{lem:affine-ranking}
Let $(\rankCoeff,\offsets)$ be a solution to constraint system ($\conSysB_\transition$).
Then the function $\affineRank(\rankCoeff,\offsets): \states(\vass) \rightarrow \mathbb{N}$, defined by
$\affineRank(\rankCoeff,\offsets)(\loc,\val) = \rankCoeff^T \val + \offsets(\loc)$,
is a quasi-ranking function for $\vass$.
Moreover, $\affineRank(\rankCoeff,\offsets)$ is ranking for transition~$\transition$.
\end{lemma}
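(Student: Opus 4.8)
The plan is to unwind the definitions and show that the inequality $\affineRank(\rankCoeff,\offsets)(\loc_1,\val_1) \ge \affineRank(\rankCoeff,\offsets)(\loc_2,\val_2)$ for an arbitrary step $(\loc_1,\val_1) \xrightarrow{\update} (\loc_2,\val_2)$ is exactly a consequence of constraint~(\ref{multcycle:eq5}) of ($\conSysB_\transition$), read off in the column corresponding to the transition $\transition' = \loc_1 \xrightarrow{\update} \loc_2$. First I would fix such a step; by the definition of a step we have $\transition' = (\loc_1,\update,\loc_2) \in \transitions(\vass)$ and $\val_2 = \val_1 + \update$, so $\update = \updates(\transition')$. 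Then I would compute the difference
\begin{align*}
\affineRank(\rankCoeff,\offsets)(\loc_1,\val_1) - \affineRank(\rankCoeff,\offsets)(\loc_2,\val_2) &= \rankCoeff^T\val_1 + \offsets(\loc_1) - \rankCoeff^T\val_2 - \offsets(\loc_2) \\
&= -\rankCoeff^T\update + \offsets(\loc_1) - \offsets(\loc_2).
\end{align*}
The key observation is that $\offsets(\loc_1) - \offsets(\loc_2)$ is precisely $-(\flowMatrix^T\offsets)(\transition')$, which I would verify from the definition of the oriented incidence matrix: column $\transition'$ of $\flowMatrix$ has a $+1$ in row $\loc_1$ and a $-1$ in row $\loc_2$ when $\loc_1 \neq \loc_2$ (so that entry $(\transition',\loc)$ of $\flowMatrix^T$ is $\flowMatrix(\loc,\transition')$), and it is the zero column when $\loc_1 = \loc_2$ — in which case $\offsets(\loc_1) - \offsets(\loc_2) = 0$ as well, so the identity still holds. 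Hence the difference equals $-(\updates^T\rankCoeff + \flowMatrix^T\offsets)(\transition')$, which by~(\ref{multcycle:eq5}) is $\ge 0$, giving the quasi-ranking property.

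For the ``moreover'' part, I would specialize the above computation to the case $\transition' = \transition$: constraint~(\ref{multcycle:eq5}) asserts a \emph{strict} inequality in line $\transition$, i.e. $(\updates^T\rankCoeff + \flowMatrix^T\offsets)(\transition) < 0$, so the difference is strictly positive and $\affineRank(\rankCoeff,\offsets)$ is ranking for $\transition$. Finally, I would check well-definedness of the codomain: since $\rankCoeff \ge 0$, $\offsets \ge 0$, and valuations $\val \in \mathbb{N}^{\dimOp(\vass)}$ are non-negative, $\affineRank(\rankCoeff,\offsets)(\loc,\val) = \rankCoeff^T\val + \offsets(\loc) \in \mathbb{N}$, so the function indeed maps into the well-founded domain $(\mathbb{N},>)$.

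I do not expect any real obstacle here — the only point requiring a moment's care is the bookkeeping that translates ``$(\flowMatrix^T\offsets)(\transition')$'' into ``$\offsets(\loc_1) - \offsets(\loc_2)$'', and in particular handling the self-loop case $\loc_1 = \loc_2$ consistently. Everything else is a direct substitution using $\val_2 = \val_1 + \update$ and the sign conventions fixed in Section~\ref{sec:farkas}.
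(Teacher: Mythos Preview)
Your approach is exactly the paper's: expand $\affineRank(\rankCoeff,\offsets)(\loc_2,\val_2)$ using $\val_2=\val_1+\update$ and then invoke row~$\transition'$ of constraint~(\ref{multcycle:eq5}) to obtain the (strict, when $\transition'=\transition$) inequality; the paper writes out the same chain of equalities and simply appeals to~(\ref{multcycle:eq5}) without spelling out the $(\flowMatrix^T\offsets)(\transition')$ identification, so your write-up is in fact more explicit.

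One bookkeeping slip to watch: with the convention $\flowMatrix(\loc_1,\transition')=+1$ and $\flowMatrix(\loc_2,\transition')=-1$ that you correctly quote from the paper, one gets $(\flowMatrix^T\offsets)(\transition')=\offsets(\loc_1)-\offsets(\loc_2)$, not its negative. This sign wrinkle is already latent in the paper's own conventions (its proof needs $\update^T\rankCoeff+\offsets(\loc_2)-\offsets(\loc_1)\le 0$ while~(\ref{multcycle:eq5}) as stated gives the sum with $\offsets(\loc_1)-\offsets(\loc_2)$), so it does not reflect a gap in your argument---just be aware that the ``moment's care'' you anticipated is exactly where the paper itself is imprecise.
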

\begin{proof}
Let $(\loc_1,\val_1) \xrightarrow{\update} (\loc_2,\val_2)$ be a step of VASS $\vass$.
By the definition of a step, we have $\val_2 = \val_1 + \update$ and $\loc_1 \xrightarrow{\update} \loc_2 \in \transitions(\vass)$.
We have $\affineRank(\rankCoeff,\offsets)(\loc_2,\val_2) = \rankCoeff^T \val_2 + \offsets(\loc_2) = \rankCoeff^T (\val_1 + \update) + \offsets(\loc_2) =
\rankCoeff^T \val_1 + \rankCoeff^T \update + \offsets(\loc_2) =
\rankCoeff^T \val_1 + \offsets(\loc_1) + \update^T \rankCoeff + \offsets(\loc_2) - \offsets(\loc_1)
\le \rankCoeff^T \val_1 + \offsets(\loc_1) = \affineRank(\rankCoeff,\offsets)(\loc_1,\val_1)$,
where we have the inequality because $(\rankCoeff,\offsets)$ is a solution to the constraints ($\conSysA_\transition$) and thus satisfies Equation~\ref{multcycle:eq5}.
Moreover, the inequality is strict for transition~$\transition$.
\end{proof}

\section{Ranking Function Generation Algorithm}

In Algorithm~\ref{alg:algorithm} we state our main procedure $\rankProc(\vass)$ which either returns a ranking function or a witness for non-termination for a connected VASS $\vass$.
If $\vass$ does not have any transitions, $\rankProc$ immediately returns the trivial ranking function, which maps every state to the empty sequence over $\mathbb{N}^*$.
In the first foreach-loop, $\rankProc$ checks for each transition $\transition \in \transitions(\vass)$ if there is a quasi-ranking function $\qrank_\transition$  which is ranking for $\transition$.
In case there is at least one such quasi-ranking function, we combine them into a single quasi-ranking function $\qrank$.
Then the transitions for which $\qrank$ is ranking are removed and $\rankProc$ is called recursively on the remaining SCCs.
If the recursive calls result in ranking functions $\rank_\scc$ for each SCC $\scc$, then
$\qrank$ and the $\rank_\scc$ are combined into a single ranking function which witnesses the termination of $\vass$.
If at some point during the recursive calls no transition can be removed from the considered sub-VASS of $\vass$, then these transitions can be combined into a non-negative cycle and a witness to non-termination has been found.

\begin{algorithm}[t]
\KwIn{a connected VASS $\vass$}
  \lIf{$\transitions(\vass) = \emptyset$}{
    \Return ``termination witnessed by $\lambda \state \in \states(\vass). \emptysequence$''
  }

$\decreasing$ := $\emptyset$\;

\ForEach{transition $\transition \in \transitions(\vass)$}{
  \If{there is a solution $\rankCoeff,\offsets$ to constraint system ($\conSysB_\transition$)}{
    set $\qrank_\transition$ := $\affineRank(\rankCoeff,\offsets)$\;
    $\decreasing$ := $\decreasing \cup \{\transition\}$\;
  }
  \lElse{
    let $\counters_\transition$ be a solution to constraint system ($\conSysA_\transition$)
  }
}

  \lIf{$\decreasing = \emptyset$}{
    \Return ``non-termination witnessed by $\sum_{\transition \in \transitions(\vass)} \counters_\transition$''
  }

$\qrank := \sumRank((\qrank_\transition)_{\transition \in \decreasing})$\;

let $\vass' := (\locations(\vass), \transitions(\vass) \setminus \decreasing)$\;

\ForEach{SCC $\scc$ of $\vass'$}{
  call $\rankProc(\scc)$\;
  \If{$\rankProc(\scc)$ returned a non-termination witness $\witness$}{
        \Return ``non-termination witnessed by $\witness$''\;
  }
  \lElse{
    let $\rank_\scc$ be the ranking function returned by $\rankProc(\scc)$
  }
}

\Return ``termination witnessed by $\sccRank(\qrank,(\rank_\scc)_{\scc \in \sccs(\vass')})$''\;

\caption{$\rankProc(\vass)$ returns a ranking function or a witness for non-termination.}
\label{alg:algorithm}
\end{algorithm}

In order to analyze the termination of the algorithm we consider the cone of cycle values generated by some VASS $\vass$:

$$
\coneOp(\vass) = \{\updates \counters \mid \flowMatrix \counters = 0, \counters \ge 0\}
$$

As usual we define the dimension $\dimOp(\cone)$ of a cone $\cone$ as the dimension of the smallest vector space containing $\cone$.

\begin{lemma}
\label{lem:recursive-call-dimension}
Let $\vass$ be some VASS such that $\rankProc(\vass)$ leads to a recursive call $\rankProc(\scc)$ for some SCC $\scc$ of $\vass$.
Then $\dimOp(\coneOp(\vass)) > \dimOp(\coneOp(\scc))$.
\end{lemma}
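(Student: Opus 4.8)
The plan is to analyze what happens when a recursive call $\rankProc(\scc)$ is triggered and relate the cycle cones of $\vass$ and $\scc$. The call $\rankProc(\scc)$ happens only when $\decreasing \neq \emptyset$ and $\scc$ is an SCC of $\vass' = (\locations(\vass), \transitions(\vass) \setminus \decreasing)$. Since $\decreasing$ contains at least one transition, say $\transition_0$, and $\transition_0$ had a solution $(\rankCoeff, \offsets)$ to constraint system ($\conSysB_{\transition_0}$), there is a quasi-ranking function $\qrank = \affineRank(\rankCoeff,\offsets)$ for $\vass$ with $\updates^T \rankCoeff + \flowMatrix^T \offsets \le 0$ and strict inequality in coordinate $\transition_0$. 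I would first record the obvious inclusion $\coneOp(\scc) \subseteq \coneOp(\vass)$: any $\counters \ge 0$ with $\flowMatrix_\scc \counters = 0$ extends (by zeros on the transitions of $\vass$ outside $\scc$) to a vector $\counters' \ge 0$ with $\flowMatrix_\vass \counters' = 0$ and $\updates_\vass \counters' = \updates_\scc \counters$. Hence $\dimOp(\coneOp(\scc)) \le \dimOp(\coneOp(\vass))$, and it remains to rule out equality.

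The key step is to exhibit a linear functional that vanishes on all of $\coneOp(\scc)$ but not on all of $\coneOp(\vass)$; this forces the dimension to drop strictly. The natural candidate is $\rankCoeff$ itself (viewed via $v \mapsto \rankCoeff^T v$). On the one hand, $\coneOp(\vass)$ contains a vector on which $\rankCoeff$ is strictly positive: by Lemma~\ref{lem:ranking-or-witness}, since ($\conSysB_{\transition_0}$) is satisfiable, ($\conSysA_{\transition_0}$) is \emph{not} — but this only says there is no non-negative multi-cycle through $\transition_0$; I instead want a cycle witness. Here I would argue directly: because $\transition_0$ lies in some SCC of $\vass$ (it was not removed as a trivial component — $\decreasing$ is built from transitions of $\vass$ and $\vass$ is connected), there is a cycle $\cycle$ through $\transition_0$, giving $\counters_\cycle \ge 0$ with $\flowMatrix \counters_\cycle = 0$ and $\counters_\cycle(\transition_0) \ge 1$; then $\rankCoeff^T (\updates \counters_\cycle) = (\updates^T\rankCoeff)^T \counters_\cycle \le -(\flowMatrix^T \offsets)^T \counters_\cycle = -\offsets^T(\flowMatrix \counters_\cycle) = 0$, and moreover this is strict because of the strict inequality at coordinate $\transition_0$ and $\counters_\cycle(\transition_0) \ge 1$ — wait, the inequality goes the wrong way for positivity. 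So instead I observe $\rankCoeff^T(\updates\counters_\cycle) < 0$, i.e. $\coneOp(\vass)$ contains a vector on which $\rankCoeff$ is \emph{strictly negative}.

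On the other hand, every $\transition \in \transitions(\scc)$ satisfies the non-strict inequality in row $\transition$ of ($\conSysB_{\transition_0}$) (the only strict row is $\transition_0 \notin \transitions(\vass')$, hence $\transition_0 \notin \transitions(\scc)$). So for any $\counters \ge 0$ with $\flowMatrix_\scc\counters = 0$, the same computation gives $\rankCoeff^T(\updates_\scc \counters) = (\updates_\scc^T\rankCoeff)^T\counters \le -(\flowMatrix_\scc^T\offsets)^T\counters = -\offsets^T(\flowMatrix_\scc\counters) = 0$; but $\counters$ is a multi-cycle of $\scc$, so $-\counters$'s support decomposes into cycles, and running each cycle backwards is \emph{not} legal in a VASS — however, for the cone (which is defined purely algebraically by $\flowMatrix\counters = 0$, $\counters \ge 0$) I can still observe that both $\updates_\scc\counters$ and, by repeating cycles, that the set $\{\updates_\scc\counters\}$ is closed under the reasoning; the clean statement is: for $\counters \ge 0$ with $\flowMatrix_\scc \counters = 0$ we also have $\rankCoeff^T(\updates_\scc\counters) \ge 0$? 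That is false in general. The correct and cleaner route: $\coneOp(\scc)$ is a cone, so if $\rankCoeff^T v < 0$ for some $v \in \coneOp(\scc)$ then $\rankCoeff$ is unbounded below on $\coneOp(\scc)$; combined with $\rankCoeff^T v \le 0$ for all $v \in \coneOp(\scc)$ this is consistent. To get the dimension drop I argue as follows: I claim $\rankCoeff^T v = 0$ for \emph{every} $v \in \coneOp(\scc)$. Indeed, $\scc$ is strongly connected, so for any transition $\transition$ of $\scc$ there is a cycle $\cycle_\transition$ of $\scc$ through $\transition$; its characteristic vector $\counters_{\cycle_\transition}$ lies in $\{\counters \ge 0 : \flowMatrix_\scc\counters = 0\}$, and $\rankCoeff^T(\updates_\scc\counters_{\cycle_\transition}) \le 0$. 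But the \emph{reverse} walk is also achievable as a nonneg combination: the vector $\counters^{\mathrm{all}}$ assigning each transition of $\scc$ the same large multiplicity is in the cone (balanced since strongly connected, by an Eulerian argument as in Lemma~\ref{lem:solution-is-multcycle}), and for two cone vectors $u,w$ with $u + w$ also forced balanced — this is getting delicate.

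The main obstacle — and the place to be careful — is exactly this: showing not merely $\rankCoeff^T v \le 0$ on $\coneOp(\scc)$ but $\rankCoeff^T v = 0$ there, so that $\rankCoeff$ is a nonzero functional (nonzero because it's strictly negative somewhere on $\coneOp(\vass)$, which contains the vector from the $\transition_0$-cycle) vanishing on the whole span of $\coneOp(\scc)$. I expect the right argument is: in a strongly connected $\scc$, for any cycle $\cycle$ there is another multi-cycle whose value is $-\valueSum(\cycle)$ up to scaling added to a balanced loop through everything — more precisely, if $\rankCoeff^T(\updates_\scc\counters) < 0$ for some balanced $\counters \ge 0$, then since $\scc$ strongly connected we can find balanced $\counters' \ge 0$ with $\counters + \counters'$ proportional to $\counters^{\mathrm{all}}$ having each transition equally often, whence $\rankCoeff^T(\updates_\scc \counters') > 0$; but $\updates_\scc\counters' \in \coneOp(\scc)$ and every element of $\coneOp(\scc)$ satisfies $\le 0$, contradiction. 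Therefore $\rankCoeff^T v = 0$ on $\coneOp(\scc)$. Since $\rankCoeff$ does not vanish on $\coneOp(\vass)$ (the $\transition_0$-cycle value gives $\rankCoeff^T v < 0$), the span of $\coneOp(\vass)$ strictly contains the span of $\coneOp(\scc)$, and $\dimOp(\coneOp(\vass)) > \dimOp(\coneOp(\scc))$ follows.
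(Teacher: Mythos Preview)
Your overall strategy matches the paper's: establish $\coneOp(\scc)\subseteq\coneOp(\vass)$, then show that the dual solution $\rankCoeff$ (for some $\transition_0\in\decreasing$) defines a hyperplane $\{x:\rankCoeff^T x=0\}$ that contains $\coneOp(\scc)$ but not $\coneOp(\vass)$. Your argument that $\coneOp(\vass)$ escapes the hyperplane via a cycle through $\transition_0$ is correct.

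The genuine gap is in the step where you need $\rankCoeff^T v = 0$ for every $v\in\coneOp(\scc)$. You correctly get $\rankCoeff^T v\le 0$, but your attempted upgrade to equality does not work. The ``complement to a uniform vector'' idea fails because $c\cdot\oneVec$ is in general \emph{not} balanced: $\flowMatrix_\scc\oneVec=0$ would require in-degree $=$ out-degree at every location, which strong connectivity does not guarantee. Replacing $c\cdot\oneVec$ by a large multiple $\counters^{\mathrm{all}}$ of a cycle visiting all transitions does not rescue the argument either: if $\rankCoeff^T(\updates_\scc\counters^{\mathrm{all}})<0$ (which you have not ruled out), then for large multiples $\rankCoeff^T(\updates_\scc\counters')=\rankCoeff^T(\updates_\scc\counters^{\mathrm{all}})-\rankCoeff^T(\updates_\scc\counters)$ is also negative, so no contradiction arises.

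The fix is a one-line observation you narrowly missed. You wrote that on rows $\transition'\in\transitions(\scc)$ the inequality $(\updates^T\rankCoeff+\flowMatrix^T\offsets)(\transition')\le 0$ is ``non-strict''; in fact it must be an \emph{equality}. If it were strict, then the very same pair $(\rankCoeff,\offsets)$ would solve $(\conSysB_{\transition'})$, and the algorithm would have put $\transition'$ into $\decreasing$, contradicting $\transition'\in\transitions(\scc)\subseteq\transitions(\vass)\setminus\decreasing$. With equality on every row supported in $\scc$, any balanced $\counters\ge 0$ supported on $\transitions(\scc)$ gives
\[
\rankCoeff^T(\updates_\scc\counters)=(\updates_\scc^T\rankCoeff)^T\counters=(-\flowMatrix_\scc^T\offsets)^T\counters=-\offsets^T(\flowMatrix_\scc\counters)=0,
\]
which is exactly the equality you needed. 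This is precisely how the paper closes the argument (phrased there via the slack vector $\difference$ with $\difference(\transition')=0$ for $\transition'\in\transitions(\scc)$).
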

\begin{proof}
  Clearly, $\coneOp(\scc) \subseteq \coneOp(\vass)$ because $\scc$ is a sub-VASS of $\vass$ and hence $\dimOp(\coneOp(\vass)) \ge \dimOp(\coneOp(\scc))$.
  By the assumption that there is a recursive call we have $\decreasing \neq \emptyset$.
  We fix some transition $\transition \in \decreasing$.
  We consider a solution $\rankCoeff,\offsets$ to the constraint system ($\conSysB_\transition$).
  We will show that $\coneOp(\scc)$ is contained in the hyperplane $\{x \mid \rankCoeff^T x = 0\}$ while $\coneOp(\vass)$ is not.
  This is sufficient to infer $\dimOp(\coneOp(\vass)) > \dimOp(\coneOp(\scc))$.

  Because $\rankCoeff,\offsets$ is a solution to ($\conSysB_\transition$) we have $\updates^T \rankCoeff + \flowMatrix^T \offsets \le 0$.
  Let $\difference \ge 0$ be the vector such that $\updates^T \rankCoeff_\transition + \flowMatrix^T \offsets + \difference = 0$ (*).
  We note that $\difference(\transition) > 0$ (because the inequality in line $\transition$ of $\updates^T \rankCoeff + \flowMatrix^T \offsets \le 0$ is strict) and $\difference(\transition') = 0$ for all transitions $\transition' \in \transitions(\scc)$ (otherwise $\transition'$ we would have been added to the $\decreasing$).
  From (*) we get by a variant of Farkas' Lemma that $\forall \counters. \flowMatrix \counters = 0 \Rightarrow  \rankCoeff_\transition^T \updates \counters + \difference^T \counters = 0$ (\#).

  We now consider a vector $\counters \ge 0$ with $\counters(\transition') = 0$ for all transitions $\transition' \in \transitions(\vass) \setminus \transitions(\scc)$.
  With $\difference(\transition') = 0$ for all transitions $\transition' \in \transitions(\scc)$ we get $\difference^T \counters = 0$.
  Thus, we get $\rankCoeff_\transition^T \updates \counters = 0$ from (\#).
  We conclude $\coneOp(\scc) = \{\updates \counters \mid \flowMatrix \counters = 0, \counters \ge 0, \counters(\transition') = 0 \text{ for } \transition' \in \transitions(\vass) \setminus \transitions(\scc) \} \subseteq \{x \mid \rankCoeff_\transition^T x = 0\}$.

  We now consider a vector $\counters \ge 0$ with $\counters(\transition) \ge 1$.
  With $\difference(\transition) > 0$ we get $\difference^T \counters > 0$.
  Thus, we get $\rankCoeff_\transition^T \updates \counters < 0$ from (\#).
  We conclude $\coneOp(\vass) \not\subseteq \{x \mid \rankCoeff_\transition^T x = 0\}$.
\end{proof}

\begin{theorem}
\label{thm:termination-dimension}
$\rankProc(\vass)$ terminates with recursion depth at most $\dimOp(\vass) + 1$.
\end{theorem}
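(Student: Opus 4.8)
The plan is to promote the per-call dimension drop of Lemma~\ref{lem:recursive-call-dimension} to a global bound on the recursion depth, using that $\coneOp(\vass)$ is a subset of $\mathbb{Q}^{\dimOp(\vass)}$ and hence has dimension somewhere between $0$ and $\dimOp(\vass)$.

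Concretely, I would fix an arbitrary maximal chain of nested recursive calls produced by an execution of $\rankProc(\vass)$, say $\vass = \vass_0, \vass_1, \dots, \vass_m$, where $\vass_{i+1}$ is one of the SCCs on which $\rankProc$ is invoked during the execution of $\rankProc(\vass_i)$. Each such $\vass_{i+1}$ is an SCC of the VASS $(\locations(\vass_i), \transitions(\vass_i) \setminus \decreasing)$ and is therefore connected, so every call in the chain is well-formed. Applying Lemma~\ref{lem:recursive-call-dimension} to each successive link of the chain yields a strictly decreasing sequence of integers
\[
\dimOp(\coneOp(\vass_0)) > \dimOp(\coneOp(\vass_1)) > \cdots > \dimOp(\coneOp(\vass_m)).
\]
Cone dimensions are non-negative, and $\coneOp(\vass_0) = \coneOp(\vass) \subseteq \mathbb{Q}^{\dimOp(\vass)}$ has dimension at most $\dimOp(\vass)$; a strictly decreasing chain of integers contained in $\{0, 1, \dots, \dimOp(\vass)\}$ has at most $\dimOp(\vass) + 1$ members, so $m \le \dimOp(\vass)$ and the chain consists of at most $\dimOp(\vass) + 1$ calls. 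As the chain was an arbitrary maximal one, this bounds the recursion depth of $\rankProc(\vass)$ by $\dimOp(\vass) + 1$.

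What needs care — and is really the only subtlety — is justifying that such maximal chains are finite, i.e., that the recursion genuinely bottoms out rather than stalling. This is again Lemma~\ref{lem:recursive-call-dimension} at dimension zero: if $\dimOp(\coneOp(\vass_i)) = 0$, then $\rankProc(\vass_i)$ can make no recursive call, since a call would require an SCC of strictly negative cone dimension; hence $\rankProc(\vass_i)$ returns already in its first line (when $\transitions(\vass_i) = \emptyset$) or right after the non-termination test (when $\decreasing = \emptyset$). Combined with the fact that each call does only finitely much non-recursive work — the two foreach-loops range over the finite transition set, respectively the finite set of SCCs of $\vass'$, and each constraint system is solvable in PTIME — the recursion tree is finitely branching and of depth at most $\dimOp(\vass)+1$, hence finite, which also yields the termination of $\rankProc(\vass)$.
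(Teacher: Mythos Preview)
Your argument is correct and follows essentially the same approach as the paper: invoke Lemma~\ref{lem:recursive-call-dimension} along a chain of recursive calls to obtain a strictly decreasing sequence of cone dimensions bounded above by $\dimOp(\vass)$ and below by $0$, hence of length at most $\dimOp(\vass)+1$. The paper's proof is a two-sentence version of exactly this; your extra paragraph on the base case and finite branching is sound elaboration rather than a different idea.
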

\begin{proof}
By Lemma~\ref{lem:recursive-call-dimension} we have that the dimension of $\coneOp(\vass)$ decreases with every recursive call.
With $\dimOp(\coneOp(\vass)) \le \dimOp(\vass)$, we get that the recursion depth is bounded by $\dimOp(\vass) + 1$ and $\rankProc(\vass)$ is guaranteed to terminate.
\end{proof}

Next we state the correctness of Algorithm~\ref{alg:algorithm}.
The proofs of these results are fairly straightforward and can be found in the appendix.

\begin{theorem}
\label{thm:correctness-non-termination}
If $\rankProc(\vass)$ returns ``non-termination witnessed by $\witness$'',
then $\witness$ is a witness for the non-termination of $\vass$.
\end{theorem}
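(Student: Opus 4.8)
The plan is to trace through the two places where $\rankProc$ emits a non-termination witness and show that in each case the returned object $\witness$ is a non-negative cycle; by Lemma~\ref{lem:cycle-non-termination} this immediately gives non-termination of the relevant sub-VASS, and I will then lift this to $\vass$ itself. I proceed by induction on the recursion depth of $\rankProc(\vass)$, which is finite by Theorem~\ref{thm:termination-dimension}.

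\emph{Base of the case analysis: witness produced directly.} The first return of a witness happens when $\decreasing = \emptyset$ after the first foreach-loop, in which case $\witness = \sum_{\transition \in \transitions(\vass)} \counters_\transition$. Here each $\counters_\transition$ is a solution to $(\conSysA_\transition)$, obtained in the \textbf{else}-branch because $(\conSysB_\transition)$ had no solution (this is consistent by Lemma~\ref{lem:ranking-or-witness}). Each $\counters_\transition$ satisfies constraints (\ref{multcycle:eq1})--(\ref{multcycle:eq3}), and since the constraint sets $\{\counters \mid \updates\counters \ge 0\}$, $\{\counters \mid \counters \ge 0\}$, $\{\counters \mid \flowMatrix\counters = 0\}$ are all closed under addition, the sum $\witness$ again satisfies (\ref{multcycle:eq1})--(\ref{multcycle:eq3}); moreover $\counters_\transition(\transition) \ge 1$ by (\ref{multcycle:eq4}), so $\witness \ge \oneVec$. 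Since $\vass$ is connected (an invariant maintained because the recursive calls are on SCCs, which are connected by definition), Corollary~\ref{cor:cycle-existence} yields a non-negative cycle $\cycle$ through every transition of $\vass$; this is the witness, and it certifies non-termination of $\vass$ by Lemma~\ref{lem:cycle-non-termination}.

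\emph{Inductive step: witness propagated from a recursive call.} The second return happens when some recursive call $\rankProc(\scc)$, on an SCC $\scc$ of $\vass' = (\locations(\vass), \transitions(\vass)\setminus\decreasing)$, returns ``non-termination witnessed by $\witness$''. The recursion depth of $\rankProc(\scc)$ is strictly smaller than that of $\rankProc(\vass)$, so by the induction hypothesis $\witness$ is a witness for the non-termination of $\scc$, i.e.\ a non-negative cycle of $\scc$. Since $\scc$ is a sub-VASS of $\vass'$, which is a sub-VASS of $\vass$ (same locations, fewer transitions), $\witness$ is in particular a cycle of $\vass$, and it is still non-negative ($\valueSum$ is unchanged). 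Hence $\witness$ is a witness for non-termination of $\vass$.

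\emph{Main obstacle.} The only delicate points are bookkeeping: making sure the connectedness invariant genuinely holds at every recursive call (it does, since SCCs are connected per the preliminaries, and the top-level input is assumed connected), and checking that a cycle of a sub-VASS is literally a cycle of the super-VASS with the same value — both are immediate from the definitions of sub-VASS, cycle, and $\valueSum$. The one spot needing a word of care is that in the $\decreasing = \emptyset$ case we must know each $\counters_\transition$ was actually assigned, i.e.\ that for every transition the algorithm entered exactly one of the two branches; this is exactly Lemma~\ref{lem:ranking-or-witness}, which guarantees that precisely one of $(\conSysA_\transition)$, $(\conSysB_\transition)$ is solvable, so the \textbf{else}-branch supplies a well-defined $\counters_\transition$ whenever $(\conSysB_\transition)$ fails.
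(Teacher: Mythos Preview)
Your proof is correct and uses the same core ingredients as the paper: closure of the solution set of (\ref{multcycle:eq1})--(\ref{multcycle:eq3}) under addition, the bound $\witness \ge \oneVec$, Corollary~\ref{cor:cycle-existence} to extract a non-negative cycle, and the sub-VASS invariant to lift the cycle back to $\vass$. The only difference is organisational: you structure the argument as an induction on recursion depth with an explicit propagation step, whereas the paper skips the induction and goes directly to the innermost recursive call $\vass'$ that first produced the witness, arguing once at that level and then invoking the sub-VASS invariant. Both are fine; the paper's version is slightly shorter, yours is slightly more explicit about how the witness travels up the call stack.
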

\begin{proof}
Let $\vass'$ be the VASS of the innermost recursive call which returned $\witness$.
We consider the execution of $\rankProc(\vass')$.
We note that $\transitions(\vass') \neq \emptyset$ and $\decreasing = \emptyset$ (otherwise $\rankProc(\vass')$ would not have returned a witness for non-termination).
Hence, $\rankProc(\vass')$ computed a solution $\counters_\transition$ to ($\conSysA_\transition$) for each transition $\transition \in \transitions(\vass')$.
We note that the solutions of the constraints (\ref{multcycle:eq1})-(\ref{multcycle:eq3}) are closed under addition.
Thus, $\witness = \sum_{\transition \in \transitions(\vass)} \counters_\transition$ is a solution of the constraints (\ref{multcycle:eq1})-(\ref{multcycle:eq3}) and we have $\witness \ge \oneVec$.
As $\vass'$ is connected, we obtain from Corollary~\ref{cor:cycle-existence} that $\vass'$ has a non-negative cycle $\cycle$.
The algorithm maintains the invariant that each VASS of a recursive call is a sub-VASS of the original VASS $\vass$.
Hence, $\cycle$ is a non-negative cycle of $\vass$.
By Lemma~\ref{lem:cycle-non-termination}, $\vass$ does not terminate.
\end{proof}

\begin{lemma}
\label{lem:SCC-decomposotion}
Let $\decreasing \subseteq \transitions(\vass)$ be the set of transitions $\transition \in \transitions$ such that constraint system ($\conSysB_\transition$) is satisfiable.
Let $\vass' = (\locations(\vass), \transitions(\vass) \setminus \decreasing)$ be the VASS without the transitions $\decreasing$.
Then $\vass'$ can be fully decomposed into strongly-connected components.
\end{lemma}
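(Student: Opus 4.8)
The plan is to reduce the statement to the claim that every transition of $\vass'$ lies on a cycle of $\vass'$. Indeed, $\vass'$ can be fully decomposed into SCCs iff each transition $\transition \in \transitions(\vass')$ belongs to some $\scc \in \sccs(\vass')$, and a transition $\transition = (\loc_1,\update,\loc_2)$ of $\vass'$ belongs to an SCC of $\vass'$ exactly when $\loc_1$ and $\loc_2$ lie in a common SCC of $\vass'$, which in turn holds exactly when $\transition$ lies on some cycle of $\vass'$ (for this equivalence one uses maximality of SCCs together with the fact that connectivity depends only on reachability between locations: adding $\transition$ to the SCC containing $\loc_1$ and $\loc_2$ cannot destroy connectivity). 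So it suffices to show: for every $\transition \in \transitions(\vass) \setminus \decreasing$ there is a cycle of $\vass'$ through $\transition$.

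First I would fix such a $\transition \notin \decreasing$. By the definition of $\decreasing$, constraint system $(\conSysB_\transition)$ has no solution, so by Lemma~\ref{lem:ranking-or-witness} the system $(\conSysA_\transition)$ has a solution $\counters \in \mathbb{Z}^{\transitions(\vass)}$ with $\updates\counters \ge 0$, $\counters \ge 0$, $\flowMatrix\counters = 0$ and $\counters(\transition) \ge 1$. By Lemma~\ref{lem:solution-is-multcycle} this $\counters$ is the transition-count vector of a non-negative multi-cycle $\multicycle$, and since $\counters(\transition) \ge 1$ the transition $\transition$ occurs on one of the cycles of $\multicycle$. This already gives a cycle through $\transition$ inside $\vass$, but it is not yet enough: a priori that cycle might traverse transitions of $\decreasing$, whereas I need a cycle of $\vass'$. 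Ruling this out is the step I expect to be the main obstacle.

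The obstacle is removed by showing $\counters(\transition') = 0$ for every $\transition' \in \decreasing$; then $\multicycle$ uses only transitions of $\transitions(\vass)\setminus\decreasing = \transitions(\vass')$, so the cycle of $\multicycle$ through $\transition$ is a cycle of $\vass'$, which completes the argument. To prove $\counters(\transition') = 0$, I would pick a solution $(\rankCoeff,\offsets)$ of $(\conSysB_{\transition'})$ and let $\difference \ge 0$ be the slack vector with $\updates^T\rankCoeff + \flowMatrix^T\offsets + \difference = 0$, noting that $\difference(\transition') > 0$ because the $(\conSysB_{\transition'})$ inequality is strict in line $\transition'$. Transposing this identity and multiplying on the right by $\counters$ gives $\rankCoeff^T\updates\counters + \offsets^T\flowMatrix\counters + \difference^T\counters = 0$; here $\rankCoeff^T\updates\counters \ge 0$ (since $\updates\counters \ge 0$ and $\rankCoeff \ge 0$), $\offsets^T\flowMatrix\counters = 0$ (since $\flowMatrix\counters = 0$), and $\difference^T\counters \ge \difference(\transition')\,\counters(\transition') \ge 0$ (since $\difference,\counters \ge 0$), so $0 \ge \difference(\transition')\,\counters(\transition') \ge 0$, which forces $\counters(\transition') = 0$ because $\difference(\transition') > 0$. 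This is exactly the orthogonality computation already used in the proof of Lemma~\ref{lem:recursive-call-dimension}, and conceptually it just restates that a transition which is strictly ranking for the quasi-ranking function $\affineRank(\rankCoeff,\offsets)$ of Lemma~\ref{lem:affine-ranking} cannot appear in any non-negative multi-cycle.
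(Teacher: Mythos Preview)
Your argument is correct and follows the same overall structure as the paper: pick $\transition \notin \decreasing$, take a solution $\counters$ of $(\conSysA_\transition)$, extract via Lemma~\ref{lem:solution-is-multcycle} a non-negative multi-cycle containing $\transition$, and show that this multi-cycle uses only transitions of $\vass'$.

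The one difference is in how that last point is established. You show $\counters(\transition') = 0$ for each $\transition' \in \decreasing$ by unpacking a solution $(\rankCoeff,\offsets)$ of $(\conSysB_{\transition'})$ and running the explicit orthogonality computation (the same one that appears in the proof of Lemma~\ref{lem:recursive-call-dimension}). The paper instead argues the contrapositive in one line: if $\counters(\transition') \ge 1$, then the \emph{same} vector $\counters$ already satisfies constraints (\ref{multcycle:eq1})--(\ref{multcycle:eq3}) together with $\counters(\transition') \ge 1$, i.e.\ it solves $(\conSysA_{\transition'})$; hence $(\conSysB_{\transition'})$ is unsatisfiable by Lemma~\ref{lem:ranking-or-witness} and $\transition' \notin \decreasing$. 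Your route is perfectly sound but re-derives by hand what the duality lemma already packages; the paper's shortcut is worth noting as it avoids reopening $(\conSysB_{\transition'})$ altogether.
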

\begin{proof}
We consider a transition $\transition \in \transitions(\vass) \setminus \decreasing$.
Let $\counters_\transition$ be a solution to constraint system ($\conSysA_\transition$).
By Lemma~\ref{lem:solution-is-multcycle} there is a multi-cycle $\multicycle$ with $\counters_\transition(\transition')$ instances of transition~$\transition'$ for each transition $\transition' \in \transitions(\vass)$.
By equation (\ref{multcycle:eq4}) we have that $\counters_\transition(\transition) \ge 1$.
Thus, transition $\transition$ is part of $\multicycle$.
We note that for each transition $\transition'$ of $\multicycle$, we have $\counters_\transition(\transition') \ge 1$ and $\counters_\transition$ is also a solution to constraint system ($\conSysA_{\transition'}$).
Hence, all transitions of $\multicycle$ do not belong to $\decreasing$.
\end{proof}

\begin{theorem}
\label{thm:corectness-termination}
If $\rankProc(\vass)$ returns ``termination witnessed by $\rank$'',
then $\rank$ is a lexicographic ranking function for $\vass$.
Moreover, the order of $\rank$ is bounded by $\dimOp(\coneOp(\vass))$.
\end{theorem}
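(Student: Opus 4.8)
The plan is to prove both claims by induction on the recursion depth of $\rankProc(\vass)$, which is finite by Theorem~\ref{thm:termination-dimension}. The base case is when $\transitions(\vass) = \emptyset$: then $\rank = \lambda\state.\emptysequence$, which is trivially a lexicographic ranking function (there are no steps to check), it has order $0$, and $\dimOp(\coneOp(\vass)) = 0$, so the order bound holds. For the inductive step we may assume $\rankProc$ returned a termination witness; in particular $\decreasing \neq \emptyset$ and every recursive call $\rankProc(\scc)$ also returned a termination witness, so by the inductive hypothesis each $\rank_\scc$ is a lexicographic ranking function for $\scc$ with order at most $\dimOp(\coneOp(\scc))$.

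First I would assemble the correctness claim from the combinators already proved. By Lemma~\ref{lem:affine-ranking}, each $\qrank_\transition = \affineRank(\rankCoeff,\offsets)$ is a quasi-ranking function for $\vass$ that is ranking for $\transition$; note these functions land in $\mathbb{N}$ because $\rankCoeff \ge 0$, $\offsets \ge 0$ and valuations are non-negative, so $\rankCoeff^T\val + \offsets(\loc) \ge 0$. Applying Lemma~\ref{lem:sum-ranking} with the index set $\decreasing$, the combined $\qrank = \sumRank((\qrank_\transition)_{\transition \in \decreasing})$ is a quasi-ranking function for $\vass$ that is ranking for every transition in $\decreasing$. Next, $\decreasing$ is exactly the set of transitions for which ($\conSysB_\transition$) is satisfiable (by the structure of the foreach-loop together with Lemma~\ref{lem:ranking-or-witness}), so Lemma~\ref{lem:SCC-decomposotion} tells us $\vass' = (\locations(\vass), \transitions(\vass)\setminus\decreasing)$ can be fully decomposed into SCCs. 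Now the hypotheses of Lemma~\ref{lem:combine-ranking} are met, and that lemma yields that $\sccRank(\qrank,(\rank_\scc)_{\scc\in\sccs(\vass')})$, which is exactly the returned $\rank$, is a lexicographic ranking function for $\vass$.

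For the order bound I would track orders through the combinators. The function $\qrank$ contributes a single $\mathbb{N}$-component, and $\sccRank$ prepends it to $\rank_\scc$, so the order of $\rank$ equals $1 + \max_{\scc \in \sccs(\vass')} \mathrm{ord}(\rank_\scc)$ (taking the max over the empty set as $0$ when $\vass'$ has no non-trivial SCCs contributing transitions). By the inductive hypothesis $\mathrm{ord}(\rank_\scc) \le \dimOp(\coneOp(\scc))$, and here is where Lemma~\ref{lem:recursive-call-dimension} does the real work: since $\decreasing \neq \emptyset$, every recursive call satisfies $\dimOp(\coneOp(\scc)) < \dimOp(\coneOp(\vass))$, hence $\dimOp(\coneOp(\scc)) \le \dimOp(\coneOp(\vass)) - 1$. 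Therefore $\mathrm{ord}(\rank) \le 1 + (\dimOp(\coneOp(\vass)) - 1) = \dimOp(\coneOp(\vass))$, as claimed.

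The main obstacle, and the place deserving the most care, is the bookkeeping around $\sccs(\vass')$: one must be careful that trivial SCCs (singleton locations with no transitions) recurse into the base case and return order-$0$ ranking functions, so they do not spoil the max, and that the identification ``$\decreasing$ = set of transitions with ($\conSysB_\transition$) satisfiable'' is accurate — this follows from Lemma~\ref{lem:ranking-or-witness}, since the else-branch of the loop (solving ($\conSysA_\transition$)) is taken precisely when ($\conSysB_\transition$) is unsatisfiable. Everything else is a routine chaining of the previously established lemmas; no new mathematical content is needed beyond the dimension-drop of Lemma~\ref{lem:recursive-call-dimension}, which is what makes the order bound tight rather than merely $\dimOp(\vass)$.
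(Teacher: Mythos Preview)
Your proposal is correct and follows essentially the same approach as the paper: induction on the recursion depth, with the base case $\transitions(\vass)=\emptyset$ handled trivially, and the inductive step assembled from Lemma~\ref{lem:affine-ranking}, Lemma~\ref{lem:sum-ranking}, Lemma~\ref{lem:SCC-decomposotion}, and Lemma~\ref{lem:combine-ranking}, with the order bound obtained via the strict dimension drop of Lemma~\ref{lem:recursive-call-dimension}. Your write-up is in fact slightly more careful than the paper's (you explicitly cite Lemma~\ref{lem:recursive-call-dimension} and Lemma~\ref{lem:ranking-or-witness}, and you address the trivial-SCC bookkeeping), but the mathematical content is identical.
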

\begin{proof}
The proof is by induction along the recursion depth.
In case $\transitions(\vass) = \emptyset$, we have $\rank = \lambda \state \in \states(\vass). \emptysequence$;
as $\rank$ is of order $0$ the claim trivially holds.
Otherwise we have $\decreasing \neq \emptyset$.
For each transition $\transition \in \decreasing$, we have from Lemma~\ref{lem:affine-ranking} that $\qrank_\transition = \affineRank(\rankCoeff,\offsets)$ is a quasi-ranking function which is ranking for $\transition$.
From Lemma~\ref{lem:sum-ranking} we have that $\qrank = \sumRank((\qrank_\transition)_{\transition \in \decreasing})$ is a quasi-ranking function which is ranking for every $\transition \in \decreasing$.
For every SCC $\scc$ of $\vass' = (\locations(\vass), \transitions(\vass) \setminus \decreasing)$, we have from the induction assumption that $\rank_\scc = \rankProc(\scc)$ is a ranking function for $\scc$ of order $\dimOp(\coneOp(\scc)) < \dimOp(\coneOp(\vass))$.
By Lemma~\ref{lem:combine-ranking} and Lemma~\ref{lem:SCC-decomposotion}, $\sccRank(\qrank,(\rank_\scc)_{\scc \in \sccs(\vass')})$ is a lexicographic ranking function for $\vass$.
Moreover, $\sccRank(\qrank,(\rank_\scc)_{\scc \in \sccs(\vass')})$ is of order $\max_{\scc \in \sccs(\vass')} \dimOp(\coneOp(\scc)) + 1 \le \dimOp(\coneOp(\vass))$.
\end{proof}

\paragraph{Connectedness of $\vass$.}
The precondition that $\vass$ needs to be connected is required such that solutions to constraints (\ref{multcycle:eq1})-(\ref{multcycle:eq3}) with $\counters \ge \oneVec$ can indeed be turned into cycles (see Corollary~\ref{cor:cycle-existence}).
However, our development also allows the computation of ranking functions for non-connected VASSs.
One needs to decompose $\vass$ into SCCs, compute a ranking functions for each SCC using Algorithm~\ref{alg:algorithm} and compute a reverse topological ordering for the DAG of the SCCs.
Similar to Lemma~\ref{lem:combine-ranking}, the reverse topological ordering can be concatenated with the SCC ranking functions in order to obtain an overall ranking function.

\paragraph{Connection to~\cite{conf/stoc/KosarajuS88}.}
Algorithm~\ref{alg:algorithm} extends algorithm ZCYCLE of Kosaraju and Sullivan~\cite{conf/stoc/KosarajuS88} by ranking function construction.
Algorithm~\ref{alg:algorithm} is a \emph{true extension} in the sense that if one deletes all lines and expression connected to the construction of ranking functions one obtains Algorithm ZCYCLE.
Because of the duality discussed in Section~\ref{sec:farkas},
the ranking function construction part can be interpreted as the \emph{dual} of algorithm ZCYCLE.
Algorithm~\ref{alg:algorithm} makes use of this duality to achieve completeness:
it either returns a ranking function, which witnesses termination, or it returns a non-negative cycle, which witnesses non-termination.
The duality also means that ranking function construction comes essentially for free, as primal-dual LP solvers simultaneously generate solutions for both problems.
An additional result is the improved analysis of the recursion depth:
\cite{conf/stoc/KosarajuS88} uses the fact that the number of locations $|\locations(\vass)|$ is a trivial upper bound of the recursion depth,
while we have shown the bound $\dimOp(\vass)+1$ (see Theorem~\ref{thm:termination-dimension}).
We will use this bound in the complexity analysis of the algorithm.

\paragraph{Complexity analysis of Algorithm~\ref{alg:algorithm}.}
We remark on the complexity of linear programming.
Given $A \in \mathbb{Z}^{n \times m}$ and $b,c \in \mathbb{Z}^m$, interior points methods find a solution to $Ax \le b$ or solve an optimization problem $\max c^Tx$ subject to $Ax \le b$ in $O(n^3L)$, where $L$ is the number of bits needed to encode the problem.

We begin our analysis of Algorithm~\ref{alg:algorithm}.
We denote by $\timeSAT$ the time needed to find a solution to constraint system ($\conSysA_\transition$).
We observe that ($\conSysA_\transition$) consists of $n = |\transitions(\vass)| + 2|\locations(\vass)| + \dimOp(\vass) + 1$ inequalities.
Let $\bits$ be the number of bits needed to encode ($\conSysA_\transition$).
By the above remark, we have $\timeSAT \in O(n^3\bits)$.
Because constraint system ($\conSysB_\transition$) is the dual of constraint system ($\conSysA_\transition$),
computing a solution to ($\conSysB_\transition$) has the same time complexity $\timeSAT$.

The main work of Algorithm~\ref{alg:algorithm}, without considering the recursive calls, happens in the first foreach-loop:
In each iteration of the loop, we consider some transition $\transition \in \transitions(\vass)$ and solve the constraint systems ($\conSysA_\transition$) and ($\conSysB_\transition$).
Thus, the foreach-loop has complexity $O(|\transitions(\vass)|\timeSAT)$.
We now consider the recursive calls.
By Theorem~\ref{thm:termination-dimension}, the recursion depth of $\rankProc(\vass)$ is bounded by $\dimOp(\vass) + 1$ (here we make use of our improved analysis of the recursion depth).
Consider the set of recursive calls made at depth $i$.
The arguments of these calls are all disjoint sub-VASSs of $\vass$.
Thus, the complexity of solving all the constraint systems at level $i$ is bounded by the complexity of solving the constraint systems for $\vass$.
There are $\dimOp(\vass)+1$ recursion levels and the work in each level is bounded by $O(|\transitions(\vass)|\timeSAT)$ as argued above.
Hence, the overall complexity of $\rankProc(\vass)$ is $O(\dimOp(\vass)|\transitions(\vass)|\timeSAT)$, i.e., \emph{polynomial} in the size of $\vass$.

\paragraph{Optimization of Algorithm~\ref{alg:algorithm}.}
The optimization uses the observation that the first foreach-loop of Algorithm~\ref{alg:algorithm}, which requires $|\transitions(\vass)|$ calls to a linear programming solver, can be replaced by a single call.
We make use of a linear objective function to simulate the equation $\counters(\transition) \ge 1$ for each constraint system ($\conSysA_\transition$) in a single optimization problem; similarly, we consider a single optimization problem instead of the  constraint systems ($\conSysB_\transition$).
We state the two optimization problems below:

\vspace{0.3cm}
\begin{tabular}{|c|c|}
\hline
 {\begin{minipage}[c]{0.4\linewidth}
\vspace{0.2cm}
primal problem ($\primal$):

\vspace{0.2cm}
$\max \oneVec^T\activeP$ with
\begin{align*}
  0 & \le \activeP \le \oneVec\\
  \activeP & \le \counters\\
  \updates \counters & \ge 0\\
  \flowMatrix \counters & = 0
\end{align*}
\vspace{-0.5cm}
\end{minipage}}
  &
{\begin{minipage}[c]{0.5\linewidth}
dual problem ($\dual$):

\vspace{0.2cm}
$\min \oneVec^T\activeQ$  with
\begin{align*}
0 & \le \activeQ \le \oneVec\\
\rankCoeff & \ge  0 \\
\offsets & \ge  0 \\
\updates^T \rankCoeff + \flowMatrix^T \offsets & \le -\oneVec + \activeQ
\end{align*}
\end{minipage}}\\
\hline
\end{tabular}
\vspace{0.3cm}

First, we observe that the linear programs (LPs) ($\primal$) and ($\dual$) are satisfiable:
consider $\activeP = 0$ and $\counters = 0$ for ($\primal$), and $\activeQ = \oneVec$, $\rankCoeff = 0$ and $\offsets = 0$ for ($\dual$).
We fix some optimal solution $\activeP,\counters$ of ($\primal$) and some optimal solution $\activeQ,\rankCoeff,\offsets$ of ($\dual$).
We observe that for each $\transition \in \transitions(\vass)$ we have $\activeP(\transition) = 0$ or $\activeP(\transition) = 1$, and also $\activeQ(\transition) = 0$ or $\activeQ(\transition) = 1$ (otherwise the solutions would not be optimal).
By Lemma~\ref{lem:ranking-or-witness}, we have $\activeP(\transition) = \activeQ(\transition)$ for each $\transition \in \transitions(\vass)$.
Hence, $\max \oneVec^T\activeP = \min \oneVec^T\activeQ$; this means that ($\primal$) and ($\dual$) are \emph{dual} LPs in the sense of linear programming.

We set $\decreasing = \{\transition \mid \activeP(\transition) = 0 \} = \{\transition \mid \activeQ(\transition) = 0\}$ and $\rank = \affineRank(\rankCoeff,\offsets)$.
We observe that $\rankCoeff,\offsets$ is a solution to constraint system ($\conSysB_\transition$) for each $\transition \in \decreasing$.
Hence, quasi-ranking function $\rank$ is ranking for each $\transition \in \transitions(\vass)$.
We also observe that $\counters$ is a solution to ($\conSysA_\transition$) for each $\transition \in \transitions(\vass) \setminus \decreasing$ and that we have  $\counters \ge \oneVec$ in case $\decreasing = \emptyset$.
This shows that optimal solutions to ($\primal$) and ($\dual$) can replace the results of the first foreach-loop of Algorithm~\ref{alg:algorithm}.

We now analyze the complexity of the optimized algorithm.
We compare the complexity of solving LP ($\primal$) to the complexity $\timeSAT \in O(n^3\bits)$ of finding a solution to constraint system ($\conSysA_\transition$).
We observe that ($\primal$) consists of $n' = 3|\transitions(\vass)| + 2|\locations(\vass)| + \dimOp(\vass)$ inequalities;
thus, we have $n' \in O(n)$.
The number of bits $\bits'$ needed to encode ($\primal$) is only linearly larger than the number of bits in ($\conSysA_\transition$) as we have added an objective function and an inequality with unit-coefficients; thus, we have $\bits' \in O(\bits)$.
We get that LP ($\primal$) can be solved with the same asymptotic complexity $O(\timeSAT)$.
As before we can be argued that the work at recursion level $i$ is bounded by $O(\timeSAT)$.
Thus, we obtain the overall complexity of $O(\dimOp(\vass)\timeSAT)$.
With this result we affirmatively answer the open question of Kosaraju and Sullivan~\cite{conf/stoc/KosarajuS88}, whether the complexity can be expressed as a polynomial function $\dimOp(\vass)$ times the complexity of a linear program.

\section{Computational Complexity Analysis of VASSs}

In this section we discuss how Algorithm~\ref{alg:algorithm} can be used for analyzing the computational complexity of a given VASS.
We begin with a precise definition of the computational complexity considered in this paper.
Let $\vass$ be a terminating VASS.
We say a state $(\loc,\val) \in \states(\vass)$ is $\parameter$-\emph{bounded}, if $\norm{\val} \le \parameter$ (we recall $\norm{\val} = \max_{i \in [1,\dimOp(\vass)]} |\val(i)|$).
The \emph{computational complexity} $\complexity_\parameter(\vass)$ is the length of the longest trace starting from some $\parameter$-bounded state.

\subsection{Upper Complexity Bounds}

We will make use of the following notions:
Let $\expression: \states(\vass) \rightarrow \mathbb{N}$ be some expression that evaluates a state of a VASS $\vass$ to a natural number.
We define the upper bound invariant $\uInv_\parameter(\expression)$ as the maximum value $\expression(\loc,\val)$ over all states $(\loc,\val)$ reachable from some $\parameter$-bounded state.
Next, we state how lexicographic ranking functions can be used for complexity analysis:

\begin{proposition}
\label{prop:complexity-from-lex-ranking}
  Let  $\rank: \states(\vass) \rightarrow \mathbb{N}^*$ be a lexicographic ranking function with $\rank = \rank_1 \oplus \cdots \oplus \rank_k$ for some $\rank_i: \states(\vass) \rightarrow \mathbb{N}$.
  Then, $\complexity_\parameter(\vass) \le \prod_{i \in [1,k]} \uInv_\parameter(\rank_i)$.
\end{proposition}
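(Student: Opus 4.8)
The plan is to bound the length of any trace by tracking how the lexicographic rank $\rank(\state) = \langle \rank_1(\state),\ldots,\rank_k(\state)\rangle$ can change along the trace. Fix a $\parameter$-bounded state $\state_0$ and a trace $\state_0 \xrightarrow{\update_1} \state_1 \xrightarrow{\update_2} \cdots \xrightarrow{\update_m} \state_m$. Since every $\state_i$ is reachable from $\state_0$, by definition of $\uInv_\parameter$ we have $\rank_i(\state_j) \le \uInv_\parameter(\rank_i)$ for every component $i \in [1,k]$ and every step $j \in [0,m]$; moreover $\rank_i(\state_j) \ge 0$ since $\rank_i$ maps into $\mathbb{N}$. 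So each component of $\rank$ along the trace is confined to the finite range $[0, \uInv_\parameter(\rank_i)]$.

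First I would observe that, because $\rank$ is a lexicographic ranking function, every step strictly decreases $\rank$ in the order $>_\lex$. The number of elements of the finite set $\prod_{i \in [1,k]} [0,\uInv_\parameter(\rank_i)]$ equals $\prod_{i \in [1,k]} (\uInv_\parameter(\rank_i)+1)$, and a strictly decreasing sequence in any finite linear order has length bounded by the size of that order; this already gives a bound, but off by the $+1$ terms. To get the stated bound $\prod_{i \in [1,k]} \uInv_\parameter(\rank_i)$ exactly, I would instead count more carefully: think of the tuple $(\rank_1(\state_j),\ldots,\rank_k(\state_j))$ as a base-mixed-radix number, namely the value $\Phi(\state_j) = \sum_{i=1}^{k} \rank_i(\state_j) \cdot \prod_{l=i+1}^{k} (\uInv_\parameter(\rank_l)+1)$. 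A routine check shows that $\state >_\lex \state'$ in the confined range implies $\Phi(\state) > \Phi(\state')$, so $\Phi$ strictly decreases with each step; since $\Phi \ge 0$ always and $\Phi(\state_0) \le \prod_{i=1}^{k}(\uInv_\parameter(\rank_i)+1) - 1$, the trace has at most $\prod_{i=1}^{k}(\uInv_\parameter(\rank_i)+1) - 1$ steps.

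The main obstacle is reconciling this $\prod (\uInv_\parameter(\rank_i)+1) - 1$ bound with the cleaner product $\prod \uInv_\parameter(\rank_i)$ claimed in the statement. I expect the intended argument is slightly different and tighter: one decomposes the trace into maximal blocks within which $\rank_1$ is constant; within such a block the suffix $\rank_2 \oplus \cdots \oplus \rank_k$ must strictly decrease at every step, so by induction on $k$ each block has length at most $\prod_{i=2}^{k} \uInv_\parameter(\rank_i)$, and there are at most $\uInv_\parameter(\rank_1)$ many strict decreases of $\rank_1$ hence at most $\uInv_\parameter(\rank_1)$ such blocks — but one must be careful that the block count, not block count $+1$, is what matters, which works out because the final step that ends the trace is itself a decrease. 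I would carry out this induction on $k$: the base case $k=1$ says a strictly decreasing $\mathbb{N}$-valued sequence bounded by $\uInv_\parameter(\rank_1)$ has length at most $\uInv_\parameter(\rank_1)$; the inductive step partitions the trace at the points where $\rank_1$ drops and applies the hypothesis to each piece. This block-counting step, and the precise bookkeeping of how many blocks there can be, is the only place where any real care is needed; everything else is immediate from the definitions of lexicographic ranking function and of $\uInv_\parameter$.
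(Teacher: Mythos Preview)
Your first argument---bound each $\rank_i$ along the trace by $[0,\uInv_\parameter(\rank_i)]$ and observe that a strictly $>_\lex$-decreasing sequence can visit each tuple at most once---is exactly the paper's proof. The paper writes only: ``the size of the image of $\rank$ restricted to states reachable from some $\parameter$-bounded state is bounded by $\prod_i \uInv_\parameter(\rank_i)$; as the value of the ranking function needs to decrease with every step, the claim follows.'' It does not engage with the $+1$ terms you notice.

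Your attempt to recover the exact product $\prod_i \uInv_\parameter(\rank_i)$ by block-decomposition does not go through, however: the number of maximal blocks on which $\rank_1$ is constant is one \emph{more} than the number of strict drops of $\rank_1$, so you get up to $\uInv_\parameter(\rank_1)+1$ blocks, not $\uInv_\parameter(\rank_1)$; your remark that ``the final step is itself a decrease'' does not repair this. Concretely, consider the single-location VASS of dimension $2$ with self-loop updates $(-1,1)$ and $(0,-1)$, and take $\rank_1(\loc,(x,y))=x$, $\rank_2(\loc,(x,y))=y$. Then $\rank_1\oplus\rank_2$ is a lexicographic ranking function, $\uInv_1(\rank_1)=1$ and $\uInv_1(\rank_2)=2$, but from $(1,1)$ the trace $(1,1)\to(1,0)\to(0,1)\to(0,0)$ has length $3>1\cdot 2$. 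So the exact bound in the proposition is in fact not attainable by the image-size argument (and, as the example shows, is not literally true); what you and the paper actually establish is $\complexity_\parameter(\vass)\le\prod_i(\uInv_\parameter(\rank_i)+1)-1$. For the paper's purposes this is harmless, since the proposition is only ever used to derive the asymptotic estimate $\complexity_\parameter(\vass)\in O(\parameter^k)$, where the $+1$'s are absorbed. Your instinct that the discrepancy deserves comment is correct; the resolution is simply that the stated constant is not sharp.
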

\begin{proof}
  The size of the image of $\rank$ restricted to states reachable from some $\parameter$-bounded state is bounded by $\prod_{i \in [1,k]} \uInv_\parameter(\rank_i)$.
  As the value of the ranking function needs to decrease with every step, the claim follows.
\end{proof}

We slightly extend this idea in order to deal with the SCC decomposition of Algorithm~\ref{alg:algorithm}.
Given a VASS $\vass$, for which $\rankProc(\vass)$ returns ``termination'', we define
$$
\begin{array}{ll}
  \compBound_\parameter(\vass) = 1, & \mbox{if } \transitions(\vass) = \emptyset, \\
  \compBound_\parameter(\vass) = \uInv_\parameter(\qrank) \cdot \max_{{\scc \in \sccs(\vass')}} \compBound_\parameter(\scc), & \mbox{otherwise},
\end{array}
$$
where $\qrank$ and $\vass'$ are the values computed by $\rankProc(\vass)$.

By the same argument as for Prop.~\ref{prop:complexity-from-lex-ranking} we obtain the following result:
\begin{proposition}
\label{prop:complexity-from-scc-ranking}
  $\complexity_\parameter(\vass) \le \compBound_\parameter(\vass)$.
\end{proposition}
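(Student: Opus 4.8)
The plan is to mirror the proof of Proposition~\ref{prop:complexity-from-lex-ranking}, but to argue by induction on the recursion depth of $\rankProc(\vass)$, using the recursive structure of $\compBound_\parameter$. First I would handle the base case: if $\transitions(\vass) = \emptyset$, then any trace of $\vass$ has length $0$, so $\complexity_\parameter(\vass) = 0 \le 1 = \compBound_\parameter(\vass)$.

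For the inductive step, assume $\transitions(\vass) \neq \emptyset$, so that $\rankProc(\vass)$ computed a quasi-ranking function $\qrank$ (ranking for every $\transition \in \decreasing$), the sub-VASS $\vass' = (\locations(\vass), \transitions(\vass) \setminus \decreasing)$, and ranking functions $\rank_\scc$ for each SCC $\scc$ of $\vass'$. I would fix a $\parameter$-bounded state $\state_0$ and an arbitrary trace $\state_0 \xrightarrow{\update_1} \state_1 \xrightarrow{\update_2} \cdots \xrightarrow{\update_t} \state_t$ starting from it, and bound $t$. The key observation is that $\qrank$ is non-increasing along the whole trace and strictly decreases precisely on the steps whose transition lies in $\decreasing$; since every value $\qrank(\state_i)$ lies in $[0, \uInv_\parameter(\qrank)]$ by definition of the upper bound invariant (all $\state_i$ are reachable from a $\parameter$-bounded state), there can be at most $\uInv_\parameter(\qrank)$ steps using a $\decreasing$-transition. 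These $\decreasing$-steps cut the trace into at most $\uInv_\parameter(\qrank)$ contiguous sub-traces, each of which uses only transitions of $\vass'$, hence stays within a single SCC $\scc$ of $\vass'$ (a sub-trace cannot pass between two different SCCs of $\vass'$ without using a transition outside all SCCs, but $\vass'$ can be fully decomposed into SCCs by Lemma~\ref{lem:SCC-decomposotion}). Each such sub-trace is a trace of the corresponding SCC $\scc$ starting from a state that is still reachable from a $\parameter$-bounded state; I would want to invoke the induction hypothesis $\complexity_\parameter(\scc) \le \compBound_\parameter(\scc)$ to bound its length by $\compBound_\parameter(\scc) \le \max_{\scc' \in \sccs(\vass')} \compBound_\parameter(\scc')$. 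Summing over the at most $\uInv_\parameter(\qrank)$ sub-traces gives $t \le \uInv_\parameter(\qrank) \cdot \max_{\scc \in \sccs(\vass')} \compBound_\parameter(\scc) = \compBound_\parameter(\vass)$.

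The main subtlety — and the step I would be most careful about — is that $\complexity_\parameter(\scc)$ is defined in terms of $\scc$-states that are $\parameter$-bounded, whereas the sub-traces we extract start from states that are merely \emph{reachable} from a $\parameter$-bounded state of $\vass$, and these need not be $\parameter$-bounded. There are two clean ways around this. One is to note that the argument of Proposition~\ref{prop:complexity-from-lex-ranking} only ever uses $\uInv_\parameter$ (the upper bound invariant over reachable states), not $\parameter$-boundedness directly, so I would instead prove the slightly stronger statement that the length of any trace of $\vass$ starting from a state reachable from a $\parameter$-bounded state is at most $\compBound_\parameter(\vass)$, which is exactly the form that feeds the induction through (and observe that $\uInv_\parameter(\qrank)$ for $\qrank$ as a function on $\scc$-states is dominated by $\uInv_\parameter$ computed over $\vass$-reachable states, since fewer states are reachable inside $\scc$). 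The alternative, following the phrasing already used for $\compBound_\parameter$, is simply to treat all the $\uInv_\parameter(\cdot)$ quantities as taken with respect to reachability in the top-level $\vass$; since the excerpt's definition of $\compBound_\parameter(\vass)$ refers to ``the values computed by $\rankProc(\vass)$'' this reading is consistent. Either way, once the bookkeeping on which states count as reachable is pinned down, the counting argument above closes the induction, and the proposition follows by taking $\vass$ itself as the top-level call.
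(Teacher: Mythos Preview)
Your inductive argument is correct and is a faithful expansion of the paper's proof, which is literally the single sentence ``by the same argument as for Proposition~\ref{prop:complexity-from-lex-ranking}''; your careful treatment of the reachability subtlety (strengthening the induction hypothesis to traces starting from states reachable from an $\parameter$-bounded state, so that the $\uInv_\parameter$ quantities are taken with respect to the top-level $\vass$) is exactly what is needed to make that sketch rigorous. The only minor slip is bookkeeping: $k$ many $\decreasing$-steps split the trace into $k{+}1$ sub-traces (not $k$), and the $\decreasing$-steps themselves contribute to $t$, so the literal bound is $\uInv_\parameter(\qrank) + (\uInv_\parameter(\qrank){+}1)\cdot\max_\scc \compBound_\parameter(\scc)$; this off-by-constant inaccuracy is already present in the paper's Proposition~\ref{prop:complexity-from-lex-ranking} (where $\rank_i$ ranges over $\uInv_\parameter(\rank_i){+}1$ values, not $\uInv_\parameter(\rank_i)$) and is irrelevant for the asymptotic $O(\parameter^k)$ applications.
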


We apply this result for the complexity analysis of conservative VASSs.
\begin{definition}
A VASS $\vass$ is \emph{conservative},
if for all transitions $\loc \xrightarrow{\update} \loc' \in \transitions(\vass)$ we have $\oneVec^T \update = 0$
\footnote{The results in this paper can be generalized to the following notion of conservative:
For every $\counters \ge 0$ with $\flowMatrix \counters = 0$  we have $\updates \counters = 0$.
This condition can be checked in PTIME.
}.
\end{definition}

We are interested in conservative VASSs because they have trivial upper bound invariants:
Let $\vass$ be a conservative VASS and let $\expression: \states(\vass) \rightarrow \mathbb{N}$ be an expression given by $\expression(\loc,\val) = \rankCoeff^T \val + \offsets(\loc)$ for some $\rankCoeff,\offsets$.
We consider some state $(\loc,\val)$ reachable from some $\parameter$-bounded state.
Because $\vass$ is conservative we have $\expression(\loc,\val) =  \rankCoeff^T \val + \offsets(\loc) \le \norm{\rankCoeff} \oneVec^T \val + \norm{\offsets} \le \norm{\rankCoeff} \dimOp(\vass) \parameter + \norm{\offsets} \in O(\parameter)$.

\begin{theorem}
  \label{thm:conservative-upper-bound}
  Let $\vass$ be a conservative VASS for which $\rankProc(\vass)$ returns a lexicographic ranking function of order $k$.
  Then, $\complexity_\parameter(\vass) \in O(\parameter^k)$.
\end{theorem}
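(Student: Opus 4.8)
The plan is to combine Proposition~\ref{prop:complexity-from-scc-ranking}, which already gives $\complexity_\parameter(\vass) \le \compBound_\parameter(\vass)$, with the observation that on a conservative VASS every quasi-ranking function produced inside $\rankProc$ has an upper bound invariant that is $O(\parameter)$. Two preparatory facts are needed. First, conservativeness is inherited by every sub-VASS encountered during the recursion: the recursive calls run on sub-VASSs $\scc$ of $\vass$, and the defining property $\oneVec^T \update = 0$ only has to hold for a subset of the transitions. Second, every $\qrank$ computed by $\rankProc$ is of the form $\qrank(\loc,\val) = \rankCoeff^T \val + \offsets(\loc)$ with $\rankCoeff \ge 0$ and $\offsets \ge 0$: by Lemma~\ref{lem:affine-ranking} each $\qrank_\transition = \affineRank(\rankCoeff,\offsets)$ is affine with non-negative coefficients (these come directly from the constraints of ($\conSysB_\transition$)), and $\qrank = \sumRank((\qrank_\transition)_{\transition \in \decreasing})$ is a finite sum of such functions, hence again affine with non-negative coefficients. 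Applying the computation displayed just before the theorem to the (conservative) sub-VASS on which $\qrank$ lives then yields $\uInv_\parameter(\qrank) \le \norm{\rankCoeff}\,\dimOp(\vass)\,\parameter + \norm{\offsets}$, which is $O(\parameter)$.

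Next I would fix the run of $\rankProc(\vass)$ that produces the order-$k$ ranking function. This run involves only finitely many sub-VASSs, hence only finitely many functions $\qrank$, so there is a single constant $C \ge 1$ (independent of $\parameter$) with $\uInv_\parameter(\qrank) \le C\parameter$ for every such $\qrank$ and every $\parameter \ge 1$. I then prove by induction along the recursion the quantitative statement: if $\rankProc(\vass')$ returns a ranking function of order $k'$ (with $k'$ the order assigned by the recursive construction in the proof of Theorem~\ref{thm:corectness-termination}), then $\compBound_\parameter(\vass') \le (C\parameter)^{k'}$ for all $\parameter \ge 1$. The base case $\transitions(\vass') = \emptyset$ is immediate, since then $k' = 0$ and $\compBound_\parameter(\vass') = 1$. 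In the inductive step, the returned function is $\sccRank(\qrank,(\rank_\scc)_{\scc \in \sccs(\vass')})$ of order $k' = 1 + \max_\scc k_\scc$, where $k_\scc$ is the order of the function returned by $\rankProc(\scc)$, while $\compBound_\parameter(\vass') = \uInv_\parameter(\qrank)\cdot\max_\scc \compBound_\parameter(\scc) \le C\parameter \cdot \max_\scc (C\parameter)^{k_\scc} = (C\parameter)^{k'}$ by the induction hypothesis (using $C\parameter \ge 1$).

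Instantiating the claim at $\vass' = \vass$ gives $\complexity_\parameter(\vass) \le \compBound_\parameter(\vass) \le (C\parameter)^k = C^k \parameter^k \in O(\parameter^k)$, as $C$ and $k$ are fixed; if the hypothesis ``order $k$'' is read loosely as ``image contained in $\mathbb{N}^{\le k}$'', the recursively defined order $k'$ satisfies $k' \le k$ and $O(\parameter^{k'}) \subseteq O(\parameter^k)$, so the conclusion is unaffected. The only mildly delicate point is the bookkeeping ensuring a single constant $C$ works throughout the recursion tree; this is harmless because the tree is finite, but alternatively one can bound $\norm{\rankCoeff}$ and $\norm{\offsets}$ a priori by the standard polynomial bound on the bit-size of basic feasible solutions of linear programs.
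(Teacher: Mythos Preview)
Your argument is correct and follows essentially the same route as the paper: use Proposition~\ref{prop:complexity-from-scc-ranking}, observe that every $\qrank$ produced in the recursion is affine with non-negative coefficients so that conservativeness gives $\uInv_\parameter(\qrank)\in O(\parameter)$, and then induct on the recursion depth (equivalently, the order) to turn the product in $\compBound_\parameter$ into $O(\parameter^k)$. You spell out the induction and the uniform choice of the constant $C$ more carefully than the paper's two-line sketch, but there is no genuine methodological difference.
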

\begin{proof}
  We observe that every quasi-ranking function $\qrank$ encountered during the recursive calls of $\rankProc(\vass)$ has the shape $\qrank(\loc,\val) = \rankCoeff^T \val + \offsets(\loc)$ for some $\rankCoeff,\offsets$ and we have $\uInv_\parameter(\qrank) \in O(\parameter)$ by the above remark.
  The claim follows from Proposition~\ref{prop:complexity-from-scc-ranking} by an induction along the order of the ranking function returned by $\rankProc(\vass)$.
\end{proof}

\subsection{Lower Complexity Bounds}

Our result on lower complexity bounds makes use of the following idea:
We consider an SCC $\scc$ of some VASS $\vass$ for which there is a recursive call $\rankProc(\scc)$ in $\rankProc(\vass)$.
For any trace of $\scc$ there is a non-negative multi-cycle $\multicycle$ of $\vass$ which can be used to ``undo'' the effects of the trace of $\scc$ while only decreasing each vector component of the VASS state linearly.
Using this argument on $k$ nested SCCs one can then obtain the lower bound $\Omega(\parameter^k)$.
The proof is in the appendix for space reasons.

\begin{theorem}
\label{thm:lower-bound}
  Let $\vass$ be a VASS for which $\rankProc(\vass)$ returns a lexicographic ranking function of order $k$.
  Then we have $\complexity_\parameter(\vass) \in \Omega(\parameter^k)$.
\end{theorem}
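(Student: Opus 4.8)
The plan is to prove the lower bound by induction on the recursion depth of $\rankProc(\vass)$, mirroring the structure of Theorem~\ref{thm:corectness-termination}. If $\rankProc(\vass)$ returns a ranking function of order $k$, then (following the proof of Theorem~\ref{thm:corectness-termination}) we have $\decreasing \neq \emptyset$, the quasi-ranking function $\qrank = \sumRank((\qrank_\transition)_{\transition \in \decreasing})$ is ranking for every $\transition \in \decreasing$, and there is an SCC $\scc$ of $\vass' = (\locations(\vass), \transitions(\vass)\setminus\decreasing)$ for which $\rankProc(\scc)$ returns a ranking function of order $k-1$. By the induction hypothesis, $\complexity_\parameter(\scc) \in \Omega(\parameter^{k-1})$, so for every $\parameter$ there is a $\parameter$-bounded state of $\scc$ from which $\scc$ admits a trace $\tau_\parameter$ of length $\Omega(\parameter^{k-1})$. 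The base case $k = 0$ is trivial since $\complexity_\parameter(\vass) \ge 1 = \parameter^0$ (or the claim is vacuous).

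The core of the argument is the ``undo'' construction sketched in the paragraph before the theorem. The long trace $\tau_\parameter$ of $\scc$ is not directly a trace of $\vass$ starting from a $\parameter$-bounded state that we can repeat; its value $\valueSum(\tau_\parameter)$ may have negative components. The idea is to find a non-negative multi-cycle $\multicycle$ of $\vass$ that uses at least one transition from $\decreasing$, which exists precisely because $\decreasing \neq \emptyset$ and, for each $\transition \in \decreasing$, constraint system $(\conSysA_\transition)$ is \emph{unsatisfiable} — wait, rather: since $\decreasing$ consists of transitions where $(\conSysB_\transition)$ is satisfiable, we instead use that $\vass$ itself (being connected and having $\decreasing \neq \emptyset$) still has enough cyclic structure. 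More carefully: the recursive call $\rankProc(\scc)$ happens inside $\rankProc(\vass)$, so $\scc$ is reachable within $\vass$, and because $\vass$ is connected there is a cycle through $\scc$ and through some transition $\transition^* \in \decreasing$. Since $\qrank$ is ranking for $\transition^*$, traversing this cycle once decreases $\qrank$ by at least $1$; but $\qrank$ has the affine form $\rankCoeff^T\val + \offsets(\loc)$, so running this cycle through $\vass$ requires and consumes a linear (in the number of repetitions) amount of each counter. The construction interleaves: start from a state with all counters $\Theta(\parameter)$, run $\tau_\parameter$ in $\scc$ (length $\Omega(\parameter^{k-1})$), then run the ``compensating'' multi-cycle $\multicycle$ a linear number $\Theta(\parameter)$ of times to restore the vector to something $\ge$ its earlier value (using non-negativity of $\multicycle$ and Proposition~\ref{prop:monotonicity}), then repeat. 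After $\Theta(\parameter)$ such rounds we have a trace of length $\Theta(\parameter)\cdot\Omega(\parameter^{k-1}) = \Omega(\parameter^k)$, and the initial state can be taken $\Theta(\parameter)$-bounded, which by rescaling gives $\Omega(\parameter^k)$ for $\parameter$-bounded states.

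The main obstacle I anticipate is making the interleaving precise so that \emph{all} vector components stay non-negative throughout: one must show the multi-cycle $\multicycle$ chosen actually compensates, in every component, for the damage done by one round of $\tau_\parameter$, with a multiplicity that is only linear in $\parameter$ (not in the length of $\tau_\parameter$). The key quantitative point is that $\valueSum(\tau_\parameter)$ has entries bounded in absolute value by $O(\parameter^{k-1})$ times the max update norm — actually by the length times the update norm, so $O(\parameter^{k-1})$ — whereas we want only $O(\parameter)$ repetitions of $\multicycle$; this forces us to instead bound, per round, the \emph{net drift} that $\multicycle$ must counteract. The cleaner route, which I would take, is: choose $\multicycle$ non-negative with $\valueSum(\multicycle) \ge 0$ supported on $\scc$-transitions plus $\transition^*$, argue (as in Lemma~\ref{lem:recursive-call-dimension}'s proof via Farkas) that $\rankCoeff^T\valueSum(\multicycle) < 0$ is impossible while on $\scc$ alone $\rankCoeff^T x = 0$, then drive the recursion so that the witnessing trace for $\scc$ is itself built from $\scc$-cycles whose values lie in $\coneOp(\scc) \subseteq \{\rankCoeff^T x = 0\}$, hence do not deplete the ``resource'' measured by $\rankCoeff$; the finitely many non-cyclic connecting segments contribute only an $O(1)$ (per round) perturbation absorbed by the initial $\Theta(\parameter)$ budget. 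Carefully bookkeeping these constants is the technical heart of the proof, and is exactly what is deferred to the appendix.
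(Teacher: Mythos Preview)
Your inductive scaffold matches the paper's, and you correctly isolate the quantitative obstacle: the trace $\tau_\parameter$ of $\scc$ may have drift of magnitude $\Theta(\parameter^{k-1})$ in some components, yet you can only afford $O(\parameter)$ repetitions of any compensating cycle. However, your proposed resolution of this obstacle does not work, and the actual mechanism in the paper is quite different.

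First, the compensating multi-cycle cannot be ``supported on $\scc$-transitions plus $\transition^*$'' with $\transition^* \in \decreasing$. By definition of $\decreasing$, constraint system $(\conSysB_{\transition^*})$ is satisfiable, hence by Lemma~\ref{lem:ranking-or-witness} constraint system $(\conSysA_{\transition^*})$ is \emph{not}: there is no non-negative multi-cycle of $\vass$ that uses $\transition^*$ at all. So any non-negative multi-cycle you invoke must avoid $\decreasing$ entirely. The right observation is the opposite one: for every $\transition \in \transitions(\scc)$ we have $\transition \notin \decreasing$, so $(\conSysA_\transition)$ \emph{is} satisfiable, and hence there is a non-negative multi-cycle of $\vass$ passing through each transition of $\scc$ with arbitrarily high multiplicity.

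Second, the hyperplane containment $\coneOp(\scc) \subseteq \{x \mid \rankCoeff^T x = 0\}$ from Lemma~\ref{lem:recursive-call-dimension} controls only the single linear form $\rankCoeff^T x$; it says nothing about the individual components of $\valueSum(\tau_\parameter)$, which is what you need to keep the trace non-negative. So ``the witnessing trace for $\scc$ does not deplete the resource measured by $\rankCoeff$'' is true but insufficient.

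The paper closes the gap with an ingredient your proposal is missing: a bound from integer programming (Papadimitriou) that lets one split the transition-count vector $\counters_\paath$ of $\tau_\parameter$ as $\counters_\paath = \counters_\paath^\circ + \counters_\paath^\diamond$, where $\counters_\paath^\diamond \ge 0$ already satisfies $\updates\counters_\paath^\diamond \ge 0$ and $\flowMatrix\counters_\paath^\diamond = 0$ (so it is harmless), while $\counters_\paath^\circ$ has entries bounded by $O(\parameter')$, \emph{independently of the length of $\tau_\parameter$}. Only this small residue $\counters_\paath^\circ$ needs to be compensated, and for that one uses the non-negative multi-cycles through $\scc$-transitions mentioned above (again size-bounded via the Papadimitriou result). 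This decomposition is the missing idea; without it your interleaving cannot be made to balance.
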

\begin{proof}
  The proof makes use of the following result from integer linear programming~\cite{journals/jacm/Papadimitriou81}:
  If equation $Ax = b$, given by some matrix $A \in \mathbb{Z}^{m\times n}$ and vector $b \in \mathbb{Z}^m$, is satisfiable by some integer vector $x \ge 0$,
  then there also is a solution $x \in [0,\norm{b} n (m \norm{A})^{2m+1}]^n$.

  We set $n = |\transitions(\vass)| + \dimOp(\vass)$ and $m = |\locations(\vass)| + \dimOp(\vass)$.
  We set $\constant = n (m \norm{\updates})^{2m+1}$ and $\constantAlt = |\transitions(\vass)| \norm{\updates} \constant^2$.
  Let $\cycle$ be some cycle of $\vass$ which visits each node at least once.
  Let $\length$ be the length of $\cycle$.
  We fix some $\parameter$ and set $\parameter' = \frac{1}{\length \norm{\updates} + 1 + |\transitions(\vass)| \norm{\updates} \constantAlt} \parameter$.

  We prove the claim by induction along the order $k$.
  If $k = 0$ then there is nothing to show.
  We assume $k > 0$.
  Thus, we must have $\transitions(\vass) \neq \emptyset$ and there is an SCC $\scc$ of $\vass' = (\locations(\vass), \transitions(\vass) \setminus \decreasing)$ such that the recursive call $\rankProc(\scc)$ returned a ranking function of order $k-1$.

  By induction assumption $\scc$ has a trace of length $\Omega(\parameter'^{k-1})$ for an $\parameter'$-bounded starting state.
  Let $\paath$ be the path associated to this trace.
  Because we are interested only in asymptotic behaviour, we can assume $\paath$ is a cycle.
  Let $\counters_\paath(\transition)$ denote the number of occurrences of transition $\transition$ on $\paath$.
  Let $\cycleVal = \valueSum(\paath) = \updates \counters_\paath$ be the value of $\paath$.
  We have $\counters_\paath \ge 0$, $\flowMatrix \counters_\paath = 0$ and $\updates \counters_\paath \ge -\parameter' \cdot \oneVec$ (because the starting state of the considered trace is $\parameter'$-bounded).

  With an argument from the proof of the theorem cited above~\cite{journals/jacm/Papadimitriou81}, we can obtain some $\counters_\paath^\circ \ge 0$, $\counters_\paath^\diamond \ge 0$ with $\counters_\paath = \counters_\paath^\circ + \counters_\paath^\diamond$, $\updates \counters_\paath^\circ \ge -\parameter' \cdot \oneVec$, $\updates \counters_\paath^\diamond \ge 0$, $\flowMatrix \counters_\paath^\circ = 0$, $\flowMatrix \counters_\paath^\diamond = 0$ and $\counters_\paath^\circ \in [0,\parameter' \constant]^{\transitions(\vass)}$.
  We observe $\norm{\updates \counters_\paath^\circ} \le \parameter' |\transitions(\vass)| \norm{\updates} \constant$.

  We note that for each $\transition \in \transitions(\scc)$ the constraint system ($\conSysA_\transition$) is satisfiable (otherwise we would have  $\transition \in \decreasing$).
  Because solutions of the constraints (\ref{multcycle:eq1})-(\ref{multcycle:eq3}) are closed under addition and multiplication by a positive constant we have that there is a solution $\counters$ to (\ref{multcycle:eq1})-(\ref{multcycle:eq3}) with $\counters(\transition) \ge \counters_\paath^\circ(\transition)$ for all $\transition \in \transitions(\vass)$.
  We set $\countersAlt = \counters - \counters_\paath^\circ$.
  We have $\countersAlt \ge 0$, $\updates \countersAlt \ge -\updates \counters_\paath^\circ$ and $\flowMatrix \countersAlt = 0$.
  Using the cited theorem, there also is a $\hat{\countersAlt} \ge 0$ with $\updates \hat{\countersAlt} \ge - \updates \counters_\paath^\circ$, $\flowMatrix \hat{\countersAlt} = 0$ and $\hat{\countersAlt} \in [0, \parameter' \constantAlt]^{\transitions(\vass)}$.
  We consider the multicycle $M$ associated to such a solution $\hat{\countersAlt}$.

  Let $\loc$ be the first location of $\paath$.
  Because $\cycle$ visits every node at least once we can combine $\cycle$ and the multi-cycle $M$ into a single cycle $\cycle'$ with start and end location $\loc$.
  Let $\val = \parameter \cdot \oneVec$.
  We show that starting from state $(\loc,\val)$ we can $\parameter'$ times
  execute the cycles $\paath$ and $\cycle'$ alternatingly.
  This is sufficient to establish  $\complexity_\parameter(\vass) \in \Omega(\parameter^k)$.

  Let $\val_i$ be the valuation in state $(\loc,\val_i)$ after $0 \le i < \parameter'$ executions of $\paath$ and $\cycle'$.
  We show by induction on $i$ that $\paath$ and $\cycle'$ can be executed one more time.
  We have $\valueSum(\paath) + \valueSum(\cycle') = \updates \counters_\paath^\circ + \updates \counters_\paath^\diamond + \valueSum(M) + \valueSum(\cycle) \ge
  \updates \counters_\paath^\circ + \updates \hat{\countersAlt} +
  \valueSum(\cycle) \ge -\length \norm{\updates} \cdot \oneVec$,
  where we have the last inequality because every step of $\cycle$ decreases each vector component by at most $\norm{\updates}$.
  Hence, we have $\val_i \ge \parameter \cdot \oneVec - i \length \norm{\updates} \cdot \oneVec$.
  In particular we have $\val_i \ge \parameter' \cdot \oneVec$, and $\paath$ can be executed one more time.
  Let $\val_i'$ the valuation after executing $\paath$ in $(\loc,\val_i)$.
  We have $\val_i' \ge \val_i - \parameter' \cdot \oneVec$.
  It remains to show that we can execute $\cycle'$.
  Let $\length'$ be the length of $\cycle'$.
  We have $\length' = \length + \oneVec^T \hat{\countersAlt} \le \length + \parameter' |\transitions(\vass)| \constantAlt$.
  In every step of $\cycle'$ we decrease each vector component by at most $\norm{\updates}$.
  Hence, we need to show $\val_i' \ge \length' \norm{\updates} \cdot \oneVec $.
  Indeed, we have $\val_i' \ge \parameter \cdot \oneVec - i \length \norm{\updates} \cdot \oneVec - \parameter' \cdot \oneVec \ge \norm{\updates} (\length + \parameter' |\transitions(\vass)|  \constantAlt)  \cdot \oneVec$.
  \end{proof}

With Theorem~\ref{thm:conservative-upper-bound} we get the following corollary:

\begin{corollary}
\label{cor:tight-conservative}
  Let $\vass$ be a conservative VASS for which $\rankProc(\vass)$ returns a lexicographic ranking function of order $k$.
  Then, $\complexity_\parameter(\vass) \in \Theta(\parameter^k)$.
\end{corollary}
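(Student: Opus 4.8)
The plan is to read off Corollary~\ref{cor:tight-conservative} as the conjunction of the two bounds proved immediately before it. Fix a conservative VASS $\vass$ such that $\rankProc(\vass)$ returns a lexicographic ranking function of order $k$.

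First I would apply Theorem~\ref{thm:conservative-upper-bound}, whose hypotheses are exactly those of the corollary, to obtain $\complexity_\parameter(\vass) \in O(\parameter^k)$. Then I would apply Theorem~\ref{thm:lower-bound}, which requires only that $\rankProc(\vass)$ returns a lexicographic ranking function of order $k$ (conservativeness is not needed there), to obtain $\complexity_\parameter(\vass) \in \Omega(\parameter^k)$. Intersecting the two asymptotic classes gives $\complexity_\parameter(\vass) \in \Theta(\parameter^k)$, which is the claim.

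There is essentially no obstacle at the level of the corollary itself: it is a one-line combination, and all the substance has already been discharged in the two cited theorems. The upper bound rests on the observation that conservativeness forces every building-block quasi-ranking function $\qrank(\loc,\val) = \rankCoeff^T \val + \offsets(\loc)$ encountered in the recursion to satisfy $\uInv_\parameter(\qrank) \in O(\parameter)$, combined with the product bound $\compBound_\parameter$ of Proposition~\ref{prop:complexity-from-scc-ranking}; the lower bound rests on the Papadimitriou-style size bounds for non-negative integer solutions of $Ax = b$, used to build, for each of the $k$ nested SCCs, a short non-negative multi-cycle that undoes the effect of the inner trace at only linear cost per vector component. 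The one point worth remarking explicitly is that the quantity $k$ is unambiguous: by Theorem~\ref{thm:corectness-termination} the order of the ranking function produced by $\rankProc(\vass)$ is a single well-defined number, so both theorems refer to the same $k$ and their conclusions combine directly.
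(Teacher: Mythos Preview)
Your proposal is correct and matches the paper's approach exactly: the paper derives the corollary as an immediate combination of Theorem~\ref{thm:conservative-upper-bound} and Theorem~\ref{thm:lower-bound}, without even writing out a separate proof. Your additional remarks about the content of those two theorems and the unambiguity of $k$ are accurate but go beyond what the paper itself spells out at this point.
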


Corollary~\ref{cor:tight-conservative} gives some evidence that the ranking functions computed by $\rankProc(\vass)$ are optimal: the order of the ranking function returned by $\rankProc(\vass)$ cannot be improved.

We remark on the lower bound $\Omega(\parameter^2)$ for VASS $\vass_\exampleCS$ from the introduction.
This lower bound cannot directly be obtained for the considered initial valuation $(\parameter \ 0 \ 0)^T$.
However, one can establish with a simple pre-analysis that a valuation  $(\Omega(\parameter) \ \Omega(\parameter) \ \Omega(\parameter))^T$ is reachable, and then Theorem~\ref{thm:lower-bound} can be applied. 
\section{Precise Linear Computational Complexity}
\label{sec:linear-complexity}

Our last result is that we can precisely compute the asymptotic computational complexity in case the complexity is linear.
This precisely characterizes the complexity of petri nets, because petri nets correspond to vector addition systems with a single location, and the complexity of such systems is at most linear.

The main idea is as follows:
Given some $\parameter$ we want to solve the optimization problem ($\parameterizedLP_\parameter$) below.
The objection function $\oneVec^T\counters$ counts the number of transitions executed.
The condition $\updates \counters \ge -\parameter$ is satisfied by any trace starting from a $\parameter$-bounded state.
However, $\parameter$ is symbolic and we cannot solve ($\parameterizedLP_\parameter$) directly.
Instead, we solve the problem ($\rationalLP$) over the rationals.
Intuitively we set $\countersAlt = \frac{1}{\parameter} \cdot \counters$.
Because we are interested in the asymptotic computational complexity we can scale a solution to ($\rationalLP$) to a solution over the integers.
The proof is in the appendix for space reasons.

\vspace{0.3cm}
\begin{tabular}{|c|c|}
\hline
 {\begin{minipage}[c]{0.4\linewidth}
\vspace{0.2cm}
parameterized LP ($\parameterizedLP_\parameter$):

\vspace{0.2cm}
$\max \oneVec^T\counters$ with
\begin{align*}
  \counters & \ge 0\\
  \updates \counters & \ge -\parameter \cdot \oneVec\\
  \flowMatrix \counters & = 0
\end{align*}
\vspace{-0.5cm}
\end{minipage}}
  &
{\begin{minipage}[c]{0.5\linewidth}
rational LP ($\rationalLP$):

\vspace{0.2cm}
$\max \oneVec^T\countersAlt$  with
\begin{align*}
  \countersAlt & \ge 0\\
  \updates \countersAlt & \ge -\oneVec\\
  \flowMatrix \countersAlt & = 0
\end{align*}
\end{minipage}}\\
\hline
\end{tabular}
\vspace{0.3cm}

\begin{theorem}
\label{thm:precise-complexity}
  Let $\vass$ be a terminating VASS.
  We consider LP ($\rationalLP$) over the rationals.
  If LP ($\rationalLP$) has a solution $c \in \mathbb{Q}$, then $c \parameter$ is the precise asymptotic computational complexity of $\vass$, i.e., $\complexity_\parameter(\vass) = c \parameter$ for $\parameter \rightarrow \infty$.
  If ($\rationalLP$) is unbounded, then the computational complexity of $\vass$ is at least quadratic.
\end{theorem}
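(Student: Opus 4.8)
The plan is to relate the length of the longest trace of $\vass$ to the optimal value of the parameterized LP ($\parameterizedLP_\parameter$), and then transfer this to the rational LP ($\rationalLP$) by scaling. The first step is the upper bound direction: I claim $\complexity_\parameter(\vass) \le \mathrm{opt}(\parameterizedLP_\parameter)$ up to additive lower-order terms. Given a trace of $\vass$ starting from a $\parameter$-bounded state, let $\counters_\paath(\transition)$ count the occurrences of each transition $\transition$ on the underlying path $\paath$; by the same reasoning as in the proof of Theorem~\ref{thm:lower-bound}, we have $\counters_\paath \ge 0$. The flow constraint $\flowMatrix \counters_\paath = 0$ holds only if $\paath$ is a cycle, so I first argue (as in the lower-bound proof) that truncating $\paath$ to a cycle changes its length by at most an additive constant depending only on $\vass$, which is negligible asymptotically. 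Once $\paath$ is a cycle, $\flowMatrix \counters_\paath = 0$; and since the starting valuation is bounded by $\parameter$ and valuations stay non-negative along the trace, every prefix sum of updates is $\ge -\parameter \cdot \oneVec$, which in particular gives $\updates \counters_\paath \ge -\parameter \cdot \oneVec$. Hence $\counters_\paath$ is feasible for ($\parameterizedLP_\parameter$), and the trace length $\oneVec^T \counters_\paath$ is at most the optimum. For the lower bound direction, given an optimal integer solution $\counters$ of ($\parameterizedLP_\parameter$), I realize it as a concrete trace: by Lemma~\ref{lem:solution-is-multcycle} it corresponds to a non-negative multi-cycle, which (using a spanning cycle $\cycle$ of $\vass$ visiting every node, exactly as in the proof of Theorem~\ref{thm:lower-bound}) can be stitched into a single cycle executable from the state $\parameter' \cdot \oneVec$ with $\parameter' = \Theta(\parameter)$, establishing $\complexity_{\parameter'}(\vass) \ge \oneVec^T\counters$; since $\complexity$ is clearly monotone and linearly-scaling in the parameter bound this suffices asymptotically.

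The second step is to connect ($\parameterizedLP_\parameter$) and ($\rationalLP$). Over the rationals, the substitution $\countersAlt = \frac{1}{\parameter} \counters$ is an exact bijection between the feasible regions of ($\parameterizedLP_\parameter$) and ($\rationalLP$): the constraints $\counters \ge 0$, $\updates\counters \ge -\parameter\cdot\oneVec$, $\flowMatrix\counters = 0$ become $\countersAlt \ge 0$, $\updates\countersAlt \ge -\oneVec$, $\flowMatrix\countersAlt = 0$, and the objective scales by $\parameter$. Therefore $\mathrm{opt}_{\mathbb Q}(\parameterizedLP_\parameter) = \parameter \cdot \mathrm{opt}_{\mathbb Q}(\rationalLP)$ exactly. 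Now I must bridge the gap between the rational optimum of ($\parameterizedLP_\parameter$) and its integer optimum, since traces correspond to integer solutions. If ($\rationalLP$) has a finite optimum $c$, then $c$ is attained at a vertex of the rational polyhedron, so $c \in \mathbb{Q}$ with bounded denominator; for $\parameter$ a multiple of that denominator, $\parameter c$ is an integer and $\frac{1}{\parameter}$ times an integer optimal vertex of ($\parameterizedLP_\parameter$) is $c \cdot \parameter$, and rounding/truncation for other $\parameter$ costs only $O(1)$. Combined with the first step, this yields $\complexity_\parameter(\vass) = c\parameter + O(1) = c\parameter$ asymptotically. If ($\rationalLP$) is unbounded, then ($\parameterizedLP_\parameter$) has rational solutions of value growing without bound as a function of $\parameter$; scaling such a solution by $\parameter$ and clearing denominators produces integer solutions of ($\parameterizedLP_\parameter$) whose value is $\omega(\parameter)$, hence (by step one's lower-bound direction) $\complexity_\parameter(\vass)$ is superlinear, and since $\complexity_\parameter(\vass) \in O(\parameter^{\dimOp(\vass)})$ by Theorem~\ref{thm:conservative-upper-bound}-type reasoning combined with the polynomial bound, one can pin it down to being at least quadratic by a more careful accounting (the value grows at least linearly in $\parameter$ beyond linear, giving $\Omega(\parameter^2)$ after one ``nesting'' of the unbounded direction, analogous to the $k$-fold nesting in Theorem~\ref{thm:lower-bound}).

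The main obstacle I expect is the passage from rational to integer solutions of ($\parameterizedLP_\parameter$) while keeping the error purely additive and $o(\parameter)$: naively rounding a rational optimum can lose a multiplicative factor, so I need to exploit that an optimal vertex has denominators bounded by a quantity depending only on $\updates$ and $\flowMatrix$ (via Cramer's rule / the Papadimitriou-style bound already invoked in Theorem~\ref{thm:lower-bound}), independent of $\parameter$, so that for $\parameter$ ranging over multiples of this denominator the scaling is exact and for general $\parameter$ the floor costs only $O(1)$. A secondary subtlety is the unbounded case: I must show the complexity is not merely superlinear but genuinely $\Omega(\parameter^2)$, which requires reusing the multi-cycle ``undo'' construction from the proof of Theorem~\ref{thm:lower-bound} one level deeper — taking the unbounded ray of ($\rationalLP$), using it to build cycles of length $\Theta(\parameter)$ that can be repeated $\Theta(\parameter)$ many times because their net effect is non-negative, exactly mirroring the $k=2$ instance of that theorem.
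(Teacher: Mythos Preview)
Your upper-bound direction for the bounded case matches the paper's: truncate the longest trace to a cycle (losing only $O(1)$), observe that $\counters_\paath$ is feasible for $(\parameterizedLP_\parameter)$, and scale by $1/\parameter$ to land in $(\rationalLP)$.

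The genuine gap is in your lower-bound direction for the bounded case. You take an optimal integer solution $\counters$ of $(\parameterizedLP_\parameter)$ with $\oneVec^T\counters \approx c\parameter$, realize it as a multi-cycle (not a \emph{non-negative} one, by the way: $\updates\counters \ge -\parameter\cdot\oneVec$, not $\ge 0$), stitch it with a spanning cycle into a single cycle $\cycle'$, and execute $\cycle'$ once from an $\parameter'$-bounded state with $\parameter' = \Theta(\parameter)$. But $\cycle'$ has length $\approx c\parameter$, so merely surviving the intermediate valuations forces $\parameter' \ge c\parameter\,\norm{\updates}$: the hidden constant in your $\Theta(\parameter)$ is $c\,\norm{\updates}$, not $1$. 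You then appeal to ``$\complexity$ is linearly-scaling in the parameter bound'' to transfer back, but that is precisely the statement being proved, and even granting it you recover at best $\complexity_\parameter(\vass) \ge \parameter/\norm{\updates}$, not $c\parameter$. So your argument yields $\Omega(\parameter)$ with the wrong constant.

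The paper's missing ingredient is a balancing trick. Rather than a multi-cycle of size $\Theta(\parameter)$, one takes the rational optimum $\countersAlt$ of $(\rationalLP)$, clears denominators to obtain a \emph{fixed-size} integer multi-cycle $M$ of length $c\denominator$ and net effect $\ge -\denominator\cdot\oneVec$ (where $\denominator$ is the lcm of the denominators of $\countersAlt$), and then forms $\cycle'$ from the spanning cycle $\cycle$ together with $\sqrt{\parameter}$ copies of $M$. Now $\cycle'$ has length $\length + \sqrt{\parameter}\,c\denominator$ and net drop at most $\length\norm{\updates} + \sqrt{\parameter}\,\denominator$, so from $\parameter\cdot\oneVec$ it can be executed roughly $\sqrt{\parameter}/\denominator$ times, for total length $\to c\parameter$. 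The point is that letting the number of copies of $M$ grow with $\parameter$ (any choice that is $\omega(1)$ and $o(\parameter)$ would do) makes the spanning cycle's contribution to both the length and the net drop asymptotically negligible, which is exactly what recovers the precise constant $c$. For the unbounded case your direct construction can be made to work along the same lines, but the paper instead observes that the unbounded ray of $(\rationalLP)$ solves some $(\conSysA_\transition)$, hence the ranking function returned by $\rankProc(\vass)$ has order at least $2$, and Theorem~\ref{thm:lower-bound} gives $\Omega(\parameter^2)$ immediately.
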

\begin{proof}
  Assume ($\rationalLP$) is unbounded.
  From the theory of linear programming we know that there is a direction in which the polyhedron given by $\countersAlt \ge 0$, $\updates \countersAlt \ge -\oneVec$ and $\flowMatrix \countersAlt = 0$ is unbounded and which increases the objective function $\oneVec^T\countersAlt$.
  Hence, there is a $\countersAlt \ge 0$ with $\updates \countersAlt \ge 0$, $\flowMatrix \countersAlt = 0$ and $\countersAlt(\transition) \ge 1$ for some $\transition \in \transitions(\vass)$.
  Thus, the ranking function returned by $\rankProc(\vass)$ is of order at least $2$ and the complexity is at least quadratic by Theorem~\ref{thm:lower-bound}.

  Assume ($\rationalLP$) is bounded.
  Let $\countersAlt \in \mathbb{Q}^{\transitions(\vass)}$ be an optimal solution.
  We set $\optimal = \oneVec^T\countersAlt$.
  We first show the upper bound.
  We fix some $\parameter$.
  We consider the longest trace starting from some $\parameter$-bounded state.
  Let $\paath$ be the path associated to this trace.
  Because we are interested only in asymptotic behaviour, we can assume $\paath$ is a cycle.
  Let $\counters_\paath(\transition)$ denote the number of occurrences of transition $\transition$ on $\paath$.
  We note that $\updates \counters_\paath = \valueSum(\paath) \ge -\parameter' \cdot \oneVec$ because the starting state of the considered worst-case trace is $\parameter$-bounded.
  Because $\paath$ is a cycle, we have $\flowMatrix \counters_\paath = 0$.
  Hence, $\frac{1}{\parameter} \cdot \counters_\paath$ is a feasible point of LP ($\rationalLP$) and we get $\oneVec^T\frac{1}{\parameter} \cdot \counters_\paath \le \optimal$.
  Thus, $\oneVec^T\counters_\paath \le \optimal\parameter$.
  Because this holds for all $\parameter$, we can conclude $\complexity_\parameter(\vass) \le \optimal\parameter$.

  We show the lower bound.
  We fix some $\parameter$.
  Let $\denominator$ be the least common multiplier of the denominators of $\countersAlt$.
  We set $\counters = \denominator \cdot \countersAlt \in \mathbb{Z}^{\transitions(\vass)}$.
  We have $\counters \ge 0$, $\updates \counters \ge -\denominator \cdot \oneVec$, $\flowMatrix \counters = 0$ and $\oneVec^T\counters = \optimal \denominator$.
  We consider the multicycle $M$ associated to $\counters$.
  Let $\cycle$ be some cycle of $\vass$ which visits each node at least once.
  Let $\length$ be the length of $\cycle$.
  Because $\cycle$ visits every node at least once we can combine $\cycle$ and $\sqrt{\parameter}$ copies of multi-cycle $M$ into a single cycle $\cycle'$.
  Let $\length'$ be the length of $\cycle'$.
  We have $\length' = \length + \sqrt{\parameter}\oneVec^T\counters = \length + \sqrt{\parameter} \optimal \denominator$.
  Let $\loc$ be the start and end location of $\cycle'$.
  We set $\parameter' = \frac{N - (\length + \sqrt{\parameter} \optimal \denominator) \norm{\updates} }{ \norm{\updates} \length + \denominator \sqrt{\parameter}}$ (rounded down if needed).
  Let $\val = \parameter \cdot \oneVec$.
  We show that starting from state $(\loc,\val)$ we can $\parameter'$ times
  execute the cycle $\cycle'$.
  This is sufficient to establish $\complexity_\parameter(\vass) = \optimal \parameter$ because of $\frac{\parameter' \length'}{c\parameter} \rightarrow 1$ for $\parameter \rightarrow \infty$.

  Let $\val_i$ be the valuation in state $(\loc,\val_i)$ after $0 \le i < \parameter'$ executions of $\cycle'$.
  We show by induction on $i$ that $\cycle'$ can be executed one more time.
  We have $\valueSum(\cycle') = \valueSum(\cycle) + \sqrt{\parameter}\valueSum(M) = \valueSum(\cycle) + \sqrt{\parameter} \updates \counters \ge - (\norm{\updates} \length + \denominator \sqrt{\parameter}) \cdot \oneVec$.
  Hence, we have $\val_i \ge \parameter \cdot \oneVec - i (\norm{\updates} \length + \denominator \sqrt{\parameter})\cdot \oneVec$.
  We have to show that we can execute $\cycle'$ one more time.
  In every step of $\cycle'$ we decrease each vector component by at most $\norm{\updates}$.
  Hence, we need to show $\val_i \ge \length' \norm{\updates} \cdot \oneVec$.
  Indeed, we have $\val_i \ge \parameter \cdot \oneVec - i (\norm{\updates} \length + \denominator \sqrt{\parameter}) \cdot \oneVec \ge (\length + \sqrt{\parameter} \optimal \denominator) \norm{\updates} \cdot \oneVec$. 
  \end{proof}
\section{Conclusion and Future Work}

\begin{wrapfigure}[10]{r}{0.21\textwidth}
\vspace{-2cm}
\begin{tikzpicture}[scale=0.4, node distance = 1cm, auto]
    \node (t1) {$\loc_1$};
    \node (t2) [below of=t1,node distance=1.5cm] {$\loc_2$};
    \path(t1) edge [line,bend left] node [right]
    {$\begin{pmatrix}
    0 \\
    0 \\
    0
    \end{pmatrix}$}(t2)
    (t2) edge [loop below,  every loop/.append style={looseness= 20}] node [right, yshift=0.3cm,xshift=0.1cm]
    {$\begin{pmatrix}
    2 \\
    -1 \\
    0
    \end{pmatrix}$}(t2)
    (t2) edge [line,bend left] node [left]
    {$\begin{pmatrix}
    0 \\
    0 \\
    -1
    \end{pmatrix}$} (t1)
    (t1) edge [loop above,  every loop/.append style={looseness= 20}] node [right, yshift=-0.3cm,xshift=0.1cm]
    {$\begin{pmatrix}
    -1 \\
    1 \\
    0
    \end{pmatrix}$}(t1)
    ;
   \end{tikzpicture}
  \vspace{-9mm}
  \caption{VASS $\vass_\mathit{exp}$}
   \label{fig:exponential}
\end{wrapfigure}

The computational complexity is not polynomial in general;
VASS $\vass_\mathit{exp}$ stated in Fig.~\ref{fig:exponential},
which we adapted from an example in~\cite{journals/tcs/HopcroftP79},
has exponential complexity.
There are many interesting problems extending the computational complexity analysis of this paper, for example:
Can we find a precise characterization of VASSs that have polynomial complexity?
Can we obtain precise bounds from ranking functions also for non-conservative VASS?

The optimization problem ($\rationalLP$) from Section~\ref{sec:linear-complexity} can be seen as an optimization problem over a rational relaxation of a VASS~\cite{continuousDA87,conf/icalp/AminofRZS15,conf/lics/BlondinH17}.
We plan to incorporate results from~\cite{conf/lics/BlondinH17} in an extended version of this paper in order to allow initial states with $0$ entries; this would for example allow to consider the initial valuation $(\parameter \ 0)^T$ for the VASS $\vass_\exampleProg$ from the introduction and derive the precise complexity of $3\parameter$ for this initial state. 

\bibliographystyle{plain}
\bibliography{main}

\end{document}